\documentclass[11pt,a4paper]{article}
\usepackage[utf8]{inputenc}
\usepackage[T1]{fontenc}
\usepackage{lmodern}
\usepackage[margin=1in]{geometry}
\usepackage{amsmath,amssymb,amsthm,mathtools}
\usepackage{booktabs}
\usepackage{array}
\usepackage{enumitem}
\usepackage{graphicx}
\usepackage{float}
\usepackage[
  colorlinks=true,
  linkcolor=blue,
  citecolor=blue,
  urlcolor=blue,
  pdftitle={BPS spectra of Tr[Psi\string^p] matrix models for odd p},
  pdfauthor={Miguel Tierz}
]{hyperref}
\usepackage{cite}

\newcommand{\tr}{\mathrm{Tr}}
\newcommand{\Zbps}{Z_{\mathrm{BPS}}}
\newcommand{\SU}{\mathrm{SU}}
\newcommand{\C}{\mathbb{C}}
\newcommand{\Z}{\mathbb{Z}}

\newtheorem{theorem}{Theorem}
\newtheorem{proposition}{Proposition}
\newtheorem{corollary}[proposition]{Corollary}
\newtheorem{conjecture}{Conjecture}
\theoremstyle{definition}
\newtheorem{remark}{Remark}
\newtheorem{numericalresult}{Numerical result}

\begin{document}

\title{BPS spectra of $\tr[\Psi^p]$ matrix models for odd $p$}
\author{Miguel Tierz\\[4pt]
\small Shanghai Institute for Mathematics and Interdisciplinary Sciences (SIMIS),\\
\small Shanghai 200433, China\\[2pt]
\small \texttt{tierz@simis.cn}}
\date{}
\maketitle

\begin{abstract}
We study the BPS cohomology and Hamiltonian of a model built from an
\(N\times N\) matrix \(\Psi\) of complex fermions, with supercharge
\(Q_p=\tr(\Psi^p)\) for odd \(p\ge3\).  Numerical calculations give the
BPS multiplicity at every fermion number for
\((p,N)=(5,3),(5,4),(5,5),(7,4)\), and through fermion number six at
\((7,5)\).  In \(\Zbps^{(p,N)}(x)=\sum_Rh_Rx^R\), \(x\) records fermion
number and \(h_R\) counts BPS states in sector \(R\).  In each of the four
cases computed at every fermion number, factoring out the lowest power
of \(x\) leaves a polynomial divisible by a positive power of \(p\) and
by \((1+x)^N\).  The quotient by these factors has nonnegative integer
coefficients.  Exact
decoupling of \(\tr\Psi\) and the Euler
characteristic prove divisibility by \((1+x)^2\).  Pairing sectors of
complementary fermion number gives a third factor when \(N\) is odd.  For
odd \(p\le2N-1\), the only range in which \(Q_p\) can be nonzero, the same
pairing proves the corresponding rank symmetry for \(Q_p\).  The additional factors
required to reach \((1+x)^N\) remain conjectural in general.
Conditional on reported numerical vanishings at larger \(N\), the
projections identify as fortuitous those BPS classes with no preimage
from some larger matrix size.  Evaluating at \(x=1\) gives the total BPS
count \(\Zbps^{(p,N)}(1)\).  For each fixed odd \(p\), the Witten indices
imply that, for any sequence of matrix sizes \(N_j\to\infty\) along which
\(N_j^{-2}\log\Zbps^{(p,N_j)}(1)\) converges, its limit lies in
\(\bigl[\log(2\cos\frac{\pi}{2p}),\log 2\bigr]\).  After dividing the Hamiltonian
by \(p^2\), normal ordering gives an alternating sum of operators \(K_k\),
with \(K_k\) containing \(k\) creation and \(k\) annihilation operators.
We obtain exact formulas for \(K_0,K_1,K_2\) when \(p=5\) and \(N\ge3\).  Only
\(K_3,K_4\) can contain spectral information beyond the number of
traceless fermions and the quadratic Casimir.  At \(N=3\), the Hodge star
maps the invariant cubic form to the quintic form up to scale, so all
five terms commute.  Calculations with integer matrices give nonzero
commutators in the reported \(N=4\) sectors.
\end{abstract}

\section{Introduction}

We study the quantum mechanics of a single \(N\times N\) matrix \(\Psi\)
whose entries \(\Psi_{ij}\) are fermion creation operators.  The group
\(\SU(N)\) is a global symmetry and is not gauged, so the full Hilbert
space is retained.  The supercharge and Hamiltonian are
\begin{equation}\label{eq:Qp}
Q_p=\tr(\Psi^p),\qquad p\ge 3\text{ odd},
\end{equation}
where \(\{\Psi_{ij},\Psi_{kl}^\dagger\}=\delta_{ik}\delta_{jl}\) and
\(H=\{Q_p,Q_p^\dagger\}\)~\cite{Witten1982,Troost2020}.  The Hilbert
space is the fermionic Fock space
\(\mathcal H=\Lambda^\bullet(\C^{N\times N})\), whose component of degree
\(R\) has fermion number \(R\).  For odd \(p\), \(Q_p^2=0\).  Since
\(H=\{Q_p,Q_p^\dagger\}\), the BPS space is
\(\ker Q_p/\operatorname{im}Q_p\): each cohomology class has a
representative of zero energy, and every state of zero energy defines
such a class.

We determine the BPS multiplicities at finite \(N\), compare BPS classes
under projections between ranks, and identify the terms that can contain
spectral information not fixed by the quadratic Casimir.

For \(p=3\), Chen obtained the exact total generating function
\cite{Chen2025}.  In this case the Hamiltonian is a constant minus the
quadratic Casimir, in accordance with Kostant's analysis of the
Laplacian and Casimir spectrum on an exterior algebra
\cite{Kostant1965}.  Reference~\cite{Chen2025} also noted the extensions
\(Q_p=\tr(\Psi^p)\) for odd \(p\), but did not study \(p\ge5\) in
detail.  For \(p\ge5\), the Hamiltonian is not in general determined by
the central Casimirs.  The numerical example at \((p,N)=(5,4)\) gives
two equivalent \(\SU(4)\) representations at different energies.

After normal ordering, the reduced Hamiltonian is
\(\widetilde H:=H/p^2=\sum_{k=0}^{p-1}(-1)^kK_k\), where \(K_k\)
contains \(k\) creation and \(k\) annihilation operators.  For \(p=5\), we determine
\(K_0,K_1,K_2\) for every \(N\ge3\).  Their alternating sum depends only
on the number of traceless fermions and the quadratic Casimir.  Only
\(K_3\) and \(K_4\) can contribute additional spectral information not
fixed by these operators.  In the numerical \((5,4)\), \(R=4\) example,
the splitting of the two equivalent 20-dimensional blocks in
\(\widetilde H\) is produced by \(-K_3+K_4\).  At \(N=3\), the traceless space for one particle has
dimension eight.  The Hodge star on its exterior algebra sends the
invariant cubic form to a nonzero multiple of the invariant quintic form,
making all five \(K_k\) commute.  Calculations with integer matrices show
that they do not all commute at \(N=4\), but this does not decide whether
the model is chaotic or integrable.

The present model is not intended as a holographic theory.  It provides
a finite dimensional system in which the BPS cohomology can be computed
and compared as \(N\) changes.  Within the cohomological analysis, we
keep two diagnostics separate.  The first is the difference between the
BPS count and the lower bound from the Witten index in each class of
fermion numbers modulo \(p\).  The second is whether a class belongs to
the stable image of the projections from larger \(N\).  Neither property
implies the other.

To formulate the second diagnostic, we adapt the distinction introduced
in Ref.~\cite{Chang2024}.  Here a class is monotone if it lies in the
stable image, the intersection of the images on cohomology induced by
projections from all larger matrices.  A class outside it is fortuitous.
That reference uses a covering complex, and we do not assume the two
constructions are equivalent.  This distinction has been studied in
\(\mathcal N=4\) supersymmetric Yang--Mills theory and in supersymmetric
Sachdev--Ye--Kitaev models
\cite{Chang2024,ChangChenSiaYang2024,ChoiChoiKimLeeLee2024}.  Related
constructions appear in the D1--D5 conformal field theory
\cite{ChangLinZhang2025,ChangZhang2025} and in
Aharony--Bergman--Jafferis--Maldacena theory
\cite{BelinABJM2025,BehanDeGioia2025}.

Further examples of fortuity have been studied in a model with higher
spin symmetry~\cite{KimLeeOh2025} and in a supergravity analysis of the
D1--D5 system~\cite{HughesShigemori2025}.  A relevant deformation of
\(\mathcal N=4\) Yang--Mills theory can produce fortuitous
cohomology~\cite{ChoiKim2025}.  Quantum corrections can also pair and
lift additional monotone and fortuitous classes~\cite{ChoiLee2025}.

At finite \(N\), trace relations in coupled matrix and vector systems
suggest a bosonic analogue of fortuity~\cite{deMelloKoch2025}.
Secondary invariants in invariant rings at finite \(N\) have also been
studied as algebraic models of nonperturbative
states~\cite{deMelloKochRodrigues2026}.  At \(N=3\), fortuitous black
hole states at rank \(N-1\) were combined with dual giant gravitons to
construct BPS cohomology at rank \(N\)~\cite{deMelloKochKim2025}.  At
tree level, the cohomology of \(Q\) can differ between gauge theories
related by \(S\) duality, with the reported mismatch involving one
fortuitous class and one monotone class~\cite{ChangLin2025Sduality}.

The calculations closest in method construct the supercharge complex
separately at each charge and determine its cohomology for small values
of \(N\)
\cite{Chang2024,ChangChenSiaYang2024,ChoiChoiKimLeeLee2024,
ChangLinZhang2025,BelinABJM2025}.  A related comparison for the dynamics
is the model of Ref.~\cite{BiggsMaldacena2026}, in which a fixed
supercharge replaces random couplings.  Its agreement with
supersymmetric Sachdev--Ye--Kitaev theory concerns the melonic
approximation for \(\SU(2)\) singlet observables at small charge.  The
model studied here has an exact \(\SU(N)\) symmetry that acts by
conjugation on the matrix fermions.  No melonic limit is constructed
here.

Other fermionic matrix models are discussed in
Refs.~\cite{AnninosDenefMonten2016,AnninosSilva,Tierz2017,
KlebanovMilekhin,KlebanovTASI,GaitanKlebanov2020,
SemenoffSzabo1996,PaniakSzabo2001}, including the single matrix model
of Ref.~\cite{BerensteinDeMelloKoch2019}.  A different family of
supersymmetric matrix models, motivated by the BMN model and by
holography in plane wave backgrounds, was proposed in
Ref.~\cite{Lee2026BMN}.  The invariant antisymmetric tensors used in the
supercharges come from Lie algebra cohomology
\cite{ChevalleyEilenberg,Koszul1950}.

We report numerical BPS multiplicities in every fermion number sector
for \((p,N)=(5,3)\), \((5,4)\), \((5,5)\), and \((7,4)\).  At \((7,5)\), only the sectors
\(R\le6\) are computed.  Section~\ref{sec:method} explains the numerical
rank assignments and their modular checks.  Unless an analytic proof is
given, the resulting multiplicities are numerical rather than certified
ranks over \(\mathbf Q\).

Let \(h_R\) be the BPS multiplicity at fermion number \(R\), and define
\(\Zbps^{(p,N)}(x)=\sum_R h_Rx^R\), with \(x\) recording fermion number.
In each of the four cases computed at every fermion number, the numerical
polynomial has the form
\[
\Zbps^{(p,N)}(x) = p^{b_{p,N}}\,x^{q_{\min}}\,(1+x)^N\,T_N^{(p)}(x).
\]
Here \(q_{\min}\) is the smallest fermion number carrying BPS states,
\(b_{p,N}\) is the exponent of the common positive power
\(p^{b_{p,N}}\), and
\(T_N^{(p)}\) has nonnegative coefficients.  Exact trace mode
decoupling and the Euler characteristic give at least two factors of \(1+x\).
Pairing complementary sectors in the highest exterior degree gives a
third when \(N\) is odd.  The factors still needed to reach \((1+x)^N\)
are conjectural.

Write \(r_R=\operatorname{rank}Q_R\), where \(Q_R\) maps
the sector with fermion number \(R\) to that with fermion number \(R+p\).
For odd \(p\le2N-1\), the same pairing gives
\(r_R=r_{N^2-p-R}\).  With
\(\mathcal R_{p,N}(x)=\sum_R r_Rx^R\), the exact
relation \((1+x)\mid\mathcal R_{5,5}(x)\) then determines the middle map
rank at \((5,5)\) from the ten reported ranks below it.

The projections identify all BPS classes at \((5,3)\), and the classes
in the sectors listed in Table~\ref{tab:fortuity} at \((5,4)\) and
\((7,4)\), as fortuitous.  These conclusions are conditional on the
reported numerical vanishings at larger \(N\).

Independently, set \(x=1\), so that \(\Zbps^{(p,N)}(1)\) is the total BPS
count.  Fix an odd \(p\) and let \(N_j\to\infty\) be any sequence of
matrix sizes for which
\(N_j^{-2}\log\Zbps^{(p,N_j)}(1)\) converges.  The Witten indices imply
that the resulting limit lies in
\[
\left[
\log\!\left(2\cos\frac{\pi}{2p}\right),\,\log 2
\right].
\]
Thus \(x=1\) and \(p\) are held fixed.  Only the matrix size tends to
infinity.

Figure~\ref{fig:regime} shows the finite pairs \((N,p)\) treated
numerically in this paper.

\begin{figure}[H]
\centering
\includegraphics[width=0.6\textwidth]{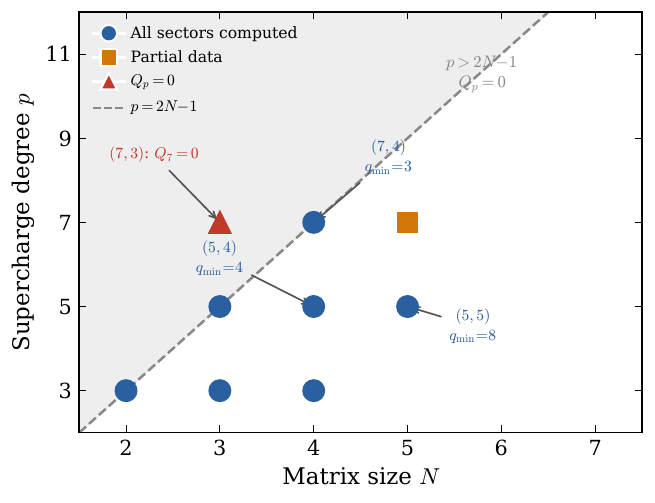}
\caption{Cases considered in this paper.
In the shaded region \(p>2N-1\), one has \(Q_p\equiv0\).
Blue circles mark cases with numerical BPS data in every sector.  The
orange square marks the partial data at \((7,5)\), and the red triangle
marks \((7,3)\), where \(Q_7=0\).
Annotations show the minimum BPS charge \(q_{\min}\) where known.}
\label{fig:regime}
\end{figure}

Section~\ref{sec:setup} defines the model and its cohomological
structure.  Section~\ref{sec:normalH} derives the Hamiltonian after
normal ordering and compares the commutators at \(N=3\) and \(N=4\).
Section~\ref{sec:method} describes the numerical method, presents the
BPS spectra, and compares different values of \(N\) by projection.
Section~\ref{sec:factorization} states the conjecture and proves the
exact identities for the ranks.  Section~\ref{sec:discussion} compares
the results with supersymmetric Sachdev--Ye--Kitaev models, derives the
bound at large \(N\), and gives the conclusions and open questions.  The
appendices contain deferred proofs, algebraic background, and the
sector data.

\section{Model and cohomological structure}
\label{sec:setup}

We describe the decomposition of the Fock space, the action of
\(\SU(N)\), the cohomological description, and the exact decoupling of
the trace fermion.  The Hamiltonian is studied separately in
Section~\ref{sec:normalH}.

\subsection{Fock space and sector decomposition}
\label{sec:fock}

The Hilbert space \(\mathcal H=\Lambda^\bullet(\C^{N\times N})\) is the
\(2^{N^2}\)-dimensional Fock space built from the \(N^2\) complex fermion
modes \(\Psi_{ij}\).
We use the convention that \(\Psi_{ij}\) acts by exterior multiplication and \(\Psi_{ij}^\dagger\) by contraction, so
\[
\Psi_{ij}^\dagger|0\rangle=0,\qquad
F:=\sum_{i,j}\Psi_{ij}\Psi_{ij}^\dagger ,
\]
where \(|0\rangle\) is the vacuum and \(F\) counts fermion number.
We separate the trace and traceless creation modes by
\begin{equation}\label{eq:trace-decomposition}
\chi:=\frac{1}{\sqrt N}\tr\Psi,\qquad
A:=\Psi-\frac{\chi}{\sqrt N}\mathbf 1,\qquad
\tr A=0 .
\end{equation}
They obey
\begin{equation}\label{eq:traceless-CAR}
\{\chi,\chi^\dagger\}=1,\qquad
\{\chi,A_{ij}^\dagger\}=0,\qquad
\{A_{ij},A_{kl}^\dagger\}
=\delta_{ik}\delta_{jl}-\frac{1}{N}\delta_{ij}\delta_{kl}.
\end{equation}
Thus the traceless space for one particle is
\(W_0=\{X\in\mathrm{Mat}_N(\C):\tr X=0\}\simeq\mathfrak{sl}_N(\C)\),
with \(\SU(N)\) acting by conjugation.
Throughout, \(\mathrm{Mat}_N(\C)\) carries the Hilbert--Schmidt inner
product \(\langle X,Y\rangle=\tr(X^\dagger Y)\), for which the matrix
units \(E_{ij}\) are orthonormal.  Here and below, ``orthonormal'' refers to this
inner product.
Choose an orthonormal Hermitian basis
\(\{T_\alpha\}_{\alpha=1}^{N^2-1}\) of \(W_0\) and write
\[
A=\sum_{\alpha=1}^{N^2-1}b_\alpha^\dagger T_\alpha,
\qquad
b_\alpha=(b_\alpha^\dagger)^\dagger .
\]
The operators \(b_\alpha^\dagger\) are the independent traceless
fermion creation modes.  They satisfy
\(\{b_\alpha,b_\beta^\dagger\}=\delta_{\alpha\beta}\).
All mixed anticommutators between the trace and traceless modes vanish.
In particular,
\(\{\chi,b_\alpha\}=\{\chi^\dagger,b_\alpha^\dagger\}=0\).
With \(T_0=\mathbf 1/\sqrt N\), the space for one particle therefore
decomposes orthogonally as
\(\mathrm{Mat}_N(\C)=\C T_0\oplus W_0\), and \(\chi\) is the
creation operator associated with \(T_0\).
In this convention the occupation number of the trace mode is
\(n_\chi:=\chi\chi^\dagger\), and
\(\hat n:=F-n_\chi\) counts the traceless fermions.
With \(V_R:=\Lambda^R(\C^{N\times N})\) the sector with fermion number
\(R\), the Hilbert space decomposes as
\[
\mathcal H=\bigoplus_{R=0}^{N^2}V_R,
\qquad
\dim V_R=\binom{N^2}{R}.
\]
By construction, \(F|_{V_R}=R\).
Since \(Q_p\) raises fermion number by \(p\),
its sectorwise action is a map
\begin{equation}\label{eq:QR}
Q_R:V_R\to V_{R+p}.
\end{equation}
We use the convention \(V_R=0\) outside \(0\le R\le N^2\), so
\(Q_R=0\) whenever its source or target sector lies outside this range.
The Hamiltonian restricted to \(V_R\) is
\begin{equation}\label{eq:HR}
H_R=Q_R^{\dagger}Q_R+Q_{R-p}Q_{R-p}^{\dagger}.
\end{equation}
Particle--hole conjugation is the unitary Hodge map on the exterior
algebra, sending each occupation pattern to its complement.  Its phase
is chosen to exchange creation and annihilation operators.  It sends
\(V_R\) to \(V_{N^2-R}\) and
conjugates \(Q_p\) to \(Q_p^\dagger\) up to an irrelevant phase.
Consequently it maps \(H_R\) to \(H_{N^2-R}\), giving
\begin{equation}\label{eq:phsym}
h_R=h_{N^2-R}
\end{equation}
for the multiplicities \(h_R\) at zero energy.
The BPS generating function graded by fermion number is
\begin{equation}\label{eq:Zbps-def}
\Zbps(x):=\sum_{R=0}^{N^2} h_R\,x^R,
\end{equation}
where \(x\) is the fugacity for fermion number.
We write
\begin{equation}\label{eq:Zbps-total}
\Zbps:=\Zbps(1)=\sum_{R=0}^{N^2}h_R
\end{equation}
for the total BPS count, and use the unevaluated symbol \(\Zbps(x)\)
whenever the charge grading matters.
The minimum BPS charge \(q_{\min}\) is the smallest~\(R\) with \(h_R>0\).

\subsection{Quadratic Casimir and the spectrum}
\label{sec:casimir}

For \(p=3\), the supercharge \(Q_3\) on the traceless sector
\(\Lambda^\bullet W_0\simeq
\Lambda^\bullet(\mathfrak{su}(N)\otimes_{\mathbb R}\mathbb C)\)
is wedge multiplication by the cubic primitive cocycle.  Here a primitive cocycle is an
indecomposable generator of the cohomology of the Lie algebra.  In this
model, it is the invariant antisymmetric tensor whose components are
the coefficients of \(\tr(A^3)\).  In the normalization used here,
Chen's result is
\begin{equation}\label{eq:cubicH}
H=3N(N^2-1)-9\hat C_2.
\end{equation}
Here \(\hat C_2\) is the quadratic Casimir for the conjugation action on
the full Fock space.  The trace mode is an \(\SU(N)\) singlet, and
\(\hat C_2\) is normalized to have eigenvalue \(N\) on the adjoint
representation.
Kostant studied the corresponding Casimir and Laplacian spectrum on the
exterior algebra of a compact Lie algebra~\cite{Kostant1965}.
Every state in a fixed irreducible summand therefore has the same
energy.  Representation theory can consequently determine the cubic
spectrum, as in Ref.~\cite{Chen2025}.

For odd \(p\ge5\) with \(p\le2N-1\), the trace fermion
still decouples exactly (Section~\ref{sec:U1}), and the traceless
supercharge is defined by the primitive cocycle of degree \(p\)
\cite{ChevalleyEilenberg,Koszul1950}.  The associated Laplacian is not
in general a function only of the central Casimir operators
\cite{ChryssomalakosEtAl1999,deAzcarragaMacfarlane2000}.  The numerical
example in Section~\ref{sec:casimir-example} makes this failure
explicit.  Two equivalent irreducible representations occur at
different energies.

\paragraph{Cartan generators.}
Adding the Cartan generators does not resolve repeated irreducible summands.
They commute with \(H\), since \(H\) is invariant under \(\SU(N)\), and
label the weight of a state inside an irreducible representation.
A repeated representation, however, appears as
\(\C^{m}\otimes V_\lambda\) with \(m\ge2\), and every element of the
Cartan subalgebra acts as
\(\mathbf 1_{\C^m}\otimes\rho_\lambda(X)\) for
\(X\) in the Cartan subalgebra.
It acts trivially on the multiplicity space \(\C^m\).
The Cartan generators therefore do not distinguish the copies.
Splitting such copies requires an operator acting nontrivially on
\(\C^m\).  For \(p=5\), Section~\ref{sec:normalH} shows that only
\(K_3\) and \(K_4\) can carry spectral information not fixed by
\(\hat n\) and \(\hat C_2\).  In the \((5,4)\), \(R=4\) example,
\(\hat n\) and \(\hat C_2\) have the same values on the two equivalent
irreducible components, while the combination \(-K_3+K_4\) in
\(\widetilde H\) produces the observed splitting.
Whether the model admits a family of \(\SU(N)\)-invariant
conserved operators, beyond functions of \(H\), that is defined for
all \(N\ge3\) and resolves these multiplicities remains open.

\subsection{Cohomological structure}
\label{sec:cohom}

We first prove the trace identities needed for the integral
normalization of \(Q_p\) and for the exact decoupling of the trace
fermion in Section~\ref{sec:U1}.

\begin{proposition}[Traces of powers of an anticommuting matrix]
\label{prop:cyclic}
Let \(X\) be an \(N\times N\) matrix whose entries are odd generators of an exterior algebra.
Then
\begin{equation}\label{eq:cyclic}
\tr(X^p)=
\begin{cases}
0,& p \text{ even},\\[2mm]
p\,\widetilde T_p(X),& p \text{ odd},
\end{cases}
\end{equation}
where \(\widetilde T_p(X)\) is integral in the ordered monomial basis of the exterior algebra.
\end{proposition}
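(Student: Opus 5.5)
The plan is to analyse how the cyclic group \(\Z/p\) acts on the monomials of \(\tr(X^p)\). Expanding gives
\[
\tr(X^p)=\sum_{i_1,\dots,i_p} X_{i_1i_2}X_{i_2i_3}\cdots X_{i_pi_1},
\]
a sum over closed index words \(w=(i_1,\dots,i_p)\in\{1,\dots,N\}^p\). Let \(c\) be the cyclic shift \(c(w)=(i_2,\dots,i_p,i_1)\), and write \(m(w)\) for the corresponding product of odd generators. The monomial \(m(c(w))\) is obtained from \(m(w)\) by moving the first factor \(X_{i_1i_2}\), which is odd, past the other \(p-1\) odd factors. Hence
\[
m(c(w))=(-1)^{p-1}m(w).
\]

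\textbf{Step 1: sign bookkeeping.} For \(p\) odd the sign is \(+1\), so each monomial is invariant under cyclic rotation. For \(p\) even the sign is \(-1\), so each monomial changes sign under rotation.

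\textbf{Step 2: even \(p\).} Group the words into orbits of \(c\). If \(m(w)\neq 0\), the \(p\) factors are distinct generators, i.e. the edges \((i_k,i_{k+1})\) are pairwise distinct. In that case a nontrivial rotation cannot fix \(w\). Indeed, if \(c^d(w)=w\) with \(0<d<p\), then the edge sequence is periodic and some edge repeats, which forces \(m(w)=0\). So every orbit carrying a nonzero monomial is free of size \(p\), and its contribution is
\[
\sum_{k=0}^{p-1}(-1)^k m(w)=0,
\]
since \(p\) is even. Words with repeated edges contribute zero by \(X_{ij}^2=0\). Hence \(\tr(X^p)=0\).

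\textbf{Step 3: odd \(p\).} The same argument shows that each orbit with a nonzero monomial has exactly \(p\) elements, all contributing the same monomial \(m(w)\). Choose one representative word per free orbit and define
\[
\widetilde T_p(X)=\sum_{\text{orbits}} m(w_{\text{rep}}).
\]
Then \(\tr(X^p)=p\,\widetilde T_p(X)\). Each \(m(w_{\text{rep}})\) is \(\pm\) an ordered basis monomial, obtained by sorting the distinct generators with a sign. Different orbits may give the same basis monomial with various signs, but the sum still has integer coefficients. This proves integrality.

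The main obstacle is Step 2's claim that all orbits with nonzero monomials are free. This needs an explicit argument that any nontrivial period of a closed word repeats an edge \(X_{i_ki_{k+1}}\): if \(c^d(w)=w\) with \(0<d<p\), the edge starting at position \(1\) also starts at position \(1+d\). Since these two occurrences are the same generator, squaring kills the product. Everything else is bookkeeping of signs.
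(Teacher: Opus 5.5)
Your proposal is correct and follows essentially the same route as the paper. In both, the cyclic rotation acts freely on nonvanishing closed words, since a nontrivial period would repeat an entry, and each rotation contributes the sign $(-1)^{p-1}$; orbit sums therefore cancel for even $p$ and equal $p$ times a signed ordered monomial for odd $p$. Labelling by index words instead of the paper's edge tuples changes nothing, because a closed word and its edge sequence determine each other.
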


\begin{proof}
A nonzero term in \(\tr(X^p)=\sum_{i_0,\ldots,i_{p-1}}X_{i_0 i_1}X_{i_1 i_2}\cdots X_{i_{p-1}i_0}\)
corresponds to an ordered tuple of \(p\) distinct matrix entries
\(e_k=(i_k,i_{k+1})\), with the indices read modulo~\(p\).  Any repeated
entry gives a vanishing wedge product.
The cyclic rotation \(C_p:(e_0,e_1,\ldots,e_{p-1})\mapsto (e_1,\ldots,e_{p-1},e_0)\) acts on the set of such tuples.
If a nontrivial rotation \(C_p^d\) (\(0<d<p\)) fixed a tuple, then \(e_0=e_d\), contradicting distinctness.
Thus every nonzero orbit has size exactly~\(p\).
Within an orbit, a single rotation permutes \(p\) odd generators cyclically, contributing a sign \((-1)^{p-1}\).
For even~\(p\), this gives alternating signs whose sum is zero.  For
odd~\(p\), all \(p\) terms in each orbit have the same sign, so the
orbit sum is~\(p\) times a single ordered monomial with integer
coefficient.
\end{proof}

\begin{remark}
The even case recovers the classical identity
\(\tr(X^{2k})=0\) for matrices with anticommuting entries
\cite{Itoh2015}.  The same cyclic orbit argument proves both statements.
No use of invariant theory is required.
\end{remark}

The vanishing of the even traces uses only that the matrix entries are
odd elements of an exterior algebra.  It does not require independent
generators and therefore applies to the traceless matrix \(A\) as well.

For odd~\(p\), Proposition~\ref{prop:cyclic} applied to \(\Psi\) gives the integral reduction
\begin{equation}\label{eq:Qreduced}
Q_p=\tr(\Psi^p)=p\,\widetilde Q_p .
\end{equation}
Multiplication by \(\widetilde Q_p\) has sector matrices
\(\widetilde Q_{p,R}\) with integer entries in the ordered bases of
matrix unit monomials.
Since \(\widetilde Q_p\) is nilpotent (see below), these maps define a cochain complex over~\(\Z\).

The next structural fact is that the BPS problem is cohomological for every odd \(p\), not just for \(p=3\).

\begin{proposition}\label{prop:nilpotent}
For every odd \(p\), the supercharge \(Q_p=\tr(\Psi^p)\) satisfies \(Q_p^2=0\).
\end{proposition}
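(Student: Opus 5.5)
The proof reduces \(Q_p^2=0\) to the vanishing of an even trace, using Proposition~\ref{prop:cyclic}. Since \(Q_p\) acts on \(\mathcal H\) by exterior multiplication by the element \(\tr(\Psi^p)\in\Lambda^p(\C^{N\times N})\), it suffices to show that this element squares to zero in the exterior algebra.

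The first observation is that \(\tr(\Psi^p)\) is homogeneous of odd degree \(p\). In a supercommutative algebra, any odd element \(\omega\) satisfies \(\omega\wedge\omega=-\omega\wedge\omega\), hence \(\omega^2=0\), because we work over \(\C\) and can divide by \(2\). Applied to \(\omega=\tr(\Psi^p)\), this gives
\[
Q_p^2=\tr(\Psi^p)\wedge\tr(\Psi^p)=0 .
\]
For completeness I would note that left multiplication by \(\omega\) composed with itself is left multiplication by \(\omega^2\), by associativity of the wedge product. This already settles the statement.

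As a consistency check tying it to Proposition~\ref{prop:cyclic}, one can also argue as follows. Proposition~\ref{prop:cyclic} shows that for even \(p\) the element \(\tr(\Psi^p)\) vanishes outright. For odd \(p\) it gives \(Q_p=p\,\widetilde Q_p\), with \(\widetilde Q_p\) integral and of odd degree. The same parity argument then shows \(\widetilde Q_p^2=0\) already over \(\Z\). The reason is that each squared monomial either repeats a generator, and so vanishes, or pairs with its swapped partner with the opposite sign, so the terms cancel in pairs without any division by \(2\). This is the form needed for the integral cochain complex mentioned after \eqref{eq:Qreduced}.

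No step here is a genuine obstacle. The only point needing care is the integral statement over \(\Z\), where the argument must pair terms rather than divide by \(2\). I would handle this by writing \(\widetilde Q_p=\sum_m c_m m\) over ordered monomials and observing that \(m\wedge m'=-m'\wedge m\) for odd monomials \(m,m'\), while \(m\wedge m=0\).
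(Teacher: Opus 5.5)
Your proof is correct and is essentially the paper's argument: both rest on the odd degree of \(Q_p\), and your monomial pairing (a repeated monomial vanishes, and distinct odd monomials cancel against their swapped partners) is exactly how the paper proves \(Q_p^2=0\). The one difference is in the integral statement: the paper gets \(\widetilde Q_p^2=0\) over \(\Z\) from \(p^2\widetilde Q_p^2=0\) and the absence of torsion in the integral exterior algebra, whereas you apply the pairing directly to \(\widetilde Q_p\), and both routes work.
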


\begin{proof}
The operator \(Q_p\) is a sum of monomials, each of which is a product of exactly \(p\) of the mutually anticommuting creation operators \(\Psi_{ij}\).
The square \(Q_p^2\) is therefore a sum over pairs of such monomials
\(m_1,m_2\).  Terms with \(m_1=m_2\) vanish because the square of an odd
monomial is zero.  For \(m_1\ne m_2\), both monomials have odd degree,
so
\[
m_1m_2=(-1)^{p^2}m_2m_1=-m_2m_1.
\]
The terms associated with \((m_1,m_2)\) and \((m_2,m_1)\) cancel.  Thus
\(Q_p^2=0\) over every field of characteristic different from two.
The argument uses only the odd degree of \(Q_p\), not its trace form.
The same cancellation is an integral identity.  Since
\(Q_p^2=p^2\widetilde Q_p^2=0\) and the integral exterior algebra has
no torsion, \(\widetilde Q_p^2=0\) over~\(\Z\).
\end{proof}

The integral normalization gives an overall factor \(p^2\) in the
Hamiltonian:
\begin{equation}\label{eq:Hreduced}
H=\{Q_p,Q_p^\dagger\}=p^2\,\widetilde H, \qquad \widetilde H:=\{\widetilde Q_p,\widetilde Q_p^\dagger\}.
\end{equation}
The numerical energies below are consistent with this normalization.

Since \(Q_p^2=0\), the operator \(Q_p\) is a differential.
Within each residue class \(R\bmod p\), the sectors
\(V_R\xrightarrow{Q_R}V_{R+p}\xrightarrow{Q_{R+p}}\cdots\)
form a cochain complex.
The BPS multiplicity in sector \(R\) is the dimension of the cohomology in degree~\(R\),
\begin{equation}\label{eq:cohom}
h_R=\dim V_R-\operatorname{rank}Q_R-\operatorname{rank}Q_{R-p}.
\end{equation}
We write
\begin{equation}\label{eq:rank-def}
r_R:=\operatorname{rank}Q_R .
\end{equation}
Thus the multiplicities at zero energy can be obtained from the ranks of the rectangular maps \eqref{eq:QR}, without diagonalizing the full Hamiltonian.

Wedging elements of complementary degree gives a pairing in the highest
exterior degree and the following symmetry of the sector ranks.  Here
\([c]_{\Lambda^{N^2}}\) denotes the coefficient of
the ordered top form in \(c\).

\begin{theorem}[Symmetry of the sector ranks]
\label{thm:rankpal}
For every odd~\(p\) with \(p\le2N-1\), the pairing in the highest
exterior degree \(\langle a,b\rangle=[a\wedge b]_{\Lambda^{N^2}}\) satisfies
\[
\langle Q_R\,a,\,b\rangle \;=\; (-1)^{pR}\,\langle a,\,Q_{N^2-p-R}\,b\rangle
\]
for all \(a\in V_R\), \(b\in V_{N^2-p-R}\).
Since the pairing is nondegenerate, \(Q_R\) and \(Q_{N^2-p-R}\) have equal rank:
\begin{equation}\label{eq:rank-palindrome}
r_R = r_{N^2-p-R}.
\end{equation}
Consequently the polynomial
\(\mathcal R_{p,N}(x)=\sum_R r_R\,x^R\) is palindromic of degree
\(N^2-p\).
\end{theorem}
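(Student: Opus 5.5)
The plan is to use that \(Q_p\) acts on the exterior algebra \(\Lambda^\bullet(\C^{N\times N})\) by left multiplication by a fixed element \(q:=\tr(\Psi^p)\) of odd degree \(p\). Everything then reduces to graded commutativity and associativity of the wedge product. The hypothesis \(p\le2N-1\) plays no role in the identity itself. It only guarantees that \(Q_p\ne0\), so that the statement is not vacuous; the identity and the rank equality hold trivially when \(Q_p=0\).

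First I would establish the adjunction formula. For \(a\in V_R\) and \(b\in V_{N^2-p-R}\), we have \(Q_R a=q\wedge a\). Graded commutativity gives \(q\wedge a=(-1)^{pR}a\wedge q\), hence
\[
(q\wedge a)\wedge b=(-1)^{pR}\,a\wedge(q\wedge b)=(-1)^{pR}\,a\wedge Q_{N^2-p-R}b,
\]
by associativity. Taking the coefficient of the ordered top form gives \(\langle Q_Ra,b\rangle=(-1)^{pR}\langle a,Q_{N^2-p-R}b\rangle\). The sign depends only on the degrees, so no information about the trace structure of \(q\) is needed.

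Next I would verify nondegeneracy of the pairing \(V_k\times V_{N^2-k}\to\C\). In the basis of ordered monomials in the matrix units, \(e_S\wedge e_{S'}\) is \(\pm\) the top form if \(S'\) is the complement of \(S\), and zero otherwise. The pairing matrix is therefore a signed permutation matrix, hence invertible, and this holds over \(\Z\) as well.

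Finally I would deduce the rank equality by linear algebra. Nondegeneracy identifies \(V_{N^2-p-R}\) with \(V_R^*\) and \(V_{N^2-R}\) with \(V_{R+p}^*\). Under these identifications the adjunction formula says that \(Q_{N^2-p-R}\) equals \(\pm\) the transpose of \(Q_R\). Since a map and its transpose have the same rank, \(r_R=r_{N^2-p-R}\).

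Palindromicity then follows from the range of \(r_R\). It can be nonzero only for \(0\le R\) and \(R+p\le N^2\), that is for \(0\le R\le N^2-p\), and the rank equality gives the coefficient symmetry \(R\leftrightarrow N^2-p-R\). The only point needing care is the degree claim: \(\mathcal R_{p,N}\) has exact degree \(N^2-p\). This requires \(r_0\ne0\), which holds because \(Q_0\) maps the vacuum to \(q\). So \(r_0=1\) precisely when \(q\ne0\), which is the case for odd \(p\le2N-1\) (the range in which \(Q_p\) is nonzero, as stated in the introduction). Given this, I expect no real obstacle; the main risk is only sign bookkeeping.
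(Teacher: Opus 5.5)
Your proposal is correct and follows the paper's proof essentially step for step. The paper uses graded commutativity to move $q$ past $a$, then nondegeneracy of the top-degree pairing to get $r_R=r_{N^2-p-R}$, and finally $r_0=1$ (since $q\neq0$ for $p\le2N-1$) together with the vanishing target for $R>N^2-p$ to get exact degree $N^2-p$.

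One small slip: you have the duality identifications swapped. The pairing identifies $V_{N^2-p-R}$ with $V_{R+p}^*$ and $V_{N^2-R}$ with $V_R^*$, because paired degrees must sum to $N^2$. With that correction, $Q_{N^2-p-R}$ is indeed $\pm Q_R^{\top}$ and your rank argument goes through unchanged.
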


\begin{proof}
Writing \(q\) for the element of degree \(p\) that acts by wedge
multiplication, we have
\begin{align*}
\langle Q_Ra,b\rangle
&=[q\wedge a\wedge b]_{\Lambda^{N^2}}\\
&=(-1)^{pR}[a\wedge q\wedge b]_{\Lambda^{N^2}}\\
&=(-1)^{pR}\langle a,Q_{N^2-p-R}b\rangle .
\end{align*}
The second line moves \(q\) past the element \(a\), which has degree \(R\).
The pairing \(V_R\otimes V_{N^2-R}\to\C\) is nondegenerate, so the adjoint relation forces \(\operatorname{rank}Q_R=\operatorname{rank}Q_{N^2-p-R}\).
For \(p\le2N-1\), the description of the invariant generators in
Appendix~\ref{app:cohom-background} gives \(Q_p\ne0\).  Since
\(\dim V_0=1\), the restriction \(Q_0\) has rank one.
Equation~\eqref{eq:rank-palindrome} then gives \(r_{N^2-p}=1\).  For
\(R>N^2-p\), the target of \(Q_R\) is zero.  Hence
\(\mathcal R_{p,N}(x)\) has degree exactly \(N^2-p\).
\end{proof}

Appendix~\ref{app:cohom-background} compares this complex with the
Chevalley--Eilenberg and Aomoto differentials
\cite{ChevalleyEilenberg,Aomoto1975,OrlikTerao2001}.  It also explains
why the present differential is not the usual Lie algebra differential.

\subsection{\texorpdfstring{$U(1)$}{U(1)} decoupling and the condition \texorpdfstring{$p\le2N-1$}{p <= 2N-1}}
\label{sec:U1}

\begin{theorem}[Exact $U(1)$ decoupling]
\label{thm:U1}
With the orthogonal decomposition \eqref{eq:trace-decomposition}, the \(U(1)\) trace fermion \(\chi\) decouples exactly for every odd \(p\ge 3\):
\begin{equation}\label{eq:trace-decouple}
Q_p=\tr(\Psi^p)=\tr(A^p).
\end{equation}
Consequently the cochain complex splits as
\(\Lambda^\bullet(\C T_0)\otimes
\bigl(\Lambda^\bullet(W_0),\,Q_p\bigr)\),
and the BPS generating function acquires one exact factor of \((1+x)\) from the free \(\chi\) mode.
Any further factor of \(1+x\) comes from the traceless complex.
\end{theorem}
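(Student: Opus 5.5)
The plan is to substitute \(\Psi=A+\frac{\chi}{\sqrt N}\mathbf 1\) into \(\tr(\Psi^p)\) and expand, using that \(\chi\) is odd while the scalar matrix \(\mathbf 1\) commutes with every matrix. Write \(c=\chi/\sqrt N\), so \(\Psi=A+c\mathbf 1\), and expand the product \((A+c\mathbf 1)^p\) as an ordered sum over words in the two letters \(A\) and \(c\mathbf 1\).

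Since \(c^2=0\), only words with at most one factor \(c\mathbf 1\) survive. The terms are therefore \(\tr(A^p)\) plus the \(p\) terms \(\tr(A^k\,c\,A^{p-1-k})\) for \(k=0,\dots,p-1\).

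Moving the odd scalar \(c\) to the front past the \(k\) factors of \(A\) costs a sign. Each entry of \(A\) is odd, and \(c\) anticommutes with each entry, so commuting \(c\) past one matrix factor \(A\) gives a sign \(-1\). The \(k\)-th term therefore equals \((-1)^k c\,\tr(A^{p-1})\), and the linear term in \(c\) is
\[
\Bigl(\sum_{k=0}^{p-1}(-1)^k\Bigr)\, c\,\tr(A^{p-1}).
\]
For odd \(p\), this alternating sum equals \(1\), so the sign bookkeeping alone does not kill the term. Instead we use Proposition~\ref{prop:cyclic} applied to \(X=A\). The entries of \(A\) are odd elements of the exterior algebra, and the remark after that proposition notes that independence of the generators is not needed. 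Since \(p-1\) is even, the proposition gives \(\tr(A^{p-1})=0\), so the linear term vanishes and \(\tr(\Psi^p)=\tr(A^p)\).

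The main care point is the sign convention when \(c\) is pulled through entries of \(A\). Whatever the signs turn out to be, the term is always a scalar multiple of \(c\,\tr(A^{p-1})\), so its vanishing is robust.

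For the splitting, note that \(\tr(A^p)\) lies in the subalgebra \(\Lambda^\bullet(W_0)\) generated by the \(b_\alpha^\dagger\). The identification \(\Lambda^\bullet(\C^{N\times N})\cong\Lambda^\bullet(\C T_0)\hat\otimes\Lambda^\bullet(W_0)\) comes from the orthogonal decomposition \(\mathrm{Mat}_N(\C)=\C T_0\oplus W_0\). Because \(\tr(A^p)\) is odd, left multiplication by it acts as \(\mathrm{id}\otimes Q_p\) up to the Koszul sign \((-1)^{\deg}\) on the \(\chi\) factor. This sign does not change kernels or images. The \(\Lambda^\bullet(\C T_0)\) factor has zero differential, with cohomology \(\C\oplus\C\chi\) in degrees \(0\) and \(1\).

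By the Künneth formula over a field, the cohomology is therefore \(\Lambda^\bullet(\C T_0)\otimes H^\bullet(\Lambda^\bullet W_0,Q_p)\). Grading by fermion number gives \(\Zbps(x)=(1+x)\,Z_{\mathrm{traceless}}(x)\). Any further factor of \(1+x\) must then divide the traceless generating function.
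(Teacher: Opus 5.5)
Your proposal is correct and follows the paper's proof essentially step by step. You substitute \(\Psi=A+c\mathbf 1\), use \(c^2=0\) and \(A(c\mathbf 1)=-(c\mathbf 1)A\) to reduce the linear term to \(\bigl(\sum_{k=0}^{p-1}(-1)^k\bigr)c\,\tr(A^{p-1})=c\,\tr(A^{p-1})\), kill it with the even-trace case of Proposition~\ref{prop:cyclic} applied to \(A\), and then use K\"unneth for the splitting. Your explicit Koszul sign on the \(\chi\)-occupied summand is a small extra care that the paper records only later, as the blocks \(Q^0_R\) and \(-Q^0_{R-1}\) in Corollary~\ref{cor:structural-factors}.
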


\begin{proof}
Set $X:=(\chi/\sqrt{N})\mathbf{1}$.
Since $\chi$ is an odd Grassmann generator, $X^2=0$.
Since $\chi$ and each $A_{ij}$ are both odd, they anticommute: $XA=-AX$.
Using $AX=-XA$ repeatedly gives $A^jX=(-1)^jXA^j$,
so the terms containing one factor of $X$ in the expansion of $(A+X)^p$ sum to
\[
  \sum_{j=0}^{p-1}A^jXA^{p-1-j}
  =\Bigl(\sum_{j=0}^{p-1}(-1)^j\Bigr)XA^{p-1}=XA^{p-1},
\]
where the last equality uses $\sum_{j=0}^{p-1}(-1)^j=1$ for odd~$p$.
Terms containing at least two factors of $X$ vanish after the factors
are anticommuted next to one another, since $X^2=0$.  Hence
$(A+X)^p=A^p+XA^{p-1}$.
Taking the trace: $\tr(\Psi^p)=\tr(A^p)+\frac{\chi}{\sqrt{N}}\tr(A^{p-1})$.
Since $p-1$ is even, $\tr(A^{p-1})=0$ by Proposition~\ref{prop:cyclic},
giving~\eqref{eq:trace-decouple}.
The complex then splits as a tensor product with the free mode~$\chi$.
By the K\"{u}nneth formula~\cite{Weibel1994}, each traceless cohomology
class occurs once with this mode empty and once with it occupied,
giving the factor $(1+x)$.
\end{proof}

\begin{corollary}[Factors implied by the complex]
\label{cor:structural-factors}
For every odd \(p\ge3\) and every \(N\ge2\),
\begin{equation}\label{eq:structural-factors}
(1+x)^2\mid \Zbps^{(p,N)}(x).
\end{equation}
If \(N\) is odd, then
\begin{equation}\label{eq:structural-factors-odd}
(1+x)^3\mid \Zbps^{(p,N)}(x).
\end{equation}
The corresponding polynomial for the map ranks,
\[
\mathcal R_{p,N}(x)=\sum_R\operatorname{rank}Q_R\,x^R
\]
satisfies
\begin{equation}\label{eq:trace-rank-splitting}
\mathcal R_{p,N}(x)=(1+x)\mathcal R^0_{p,N}(x),
\end{equation}
where \(\mathcal R^0_{p,N}\) is the corresponding polynomial for the
traceless complex.  If \(N\) is odd and \(p\le2N-1\), then
\((1+x)^2\mid\mathcal R_{p,N}(x)\).
\end{corollary}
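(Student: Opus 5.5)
The plan is to work entirely with the traceless complex \(\bigl(\Lambda^\bullet W_0,Q_p\bigr)\) provided by Theorem~\ref{thm:U1}, and to derive everything from one generating-function identity together with the rank symmetry of Theorem~\ref{thm:rankpal}. Write \(h^0_R\) and \(r^0_R\) for the BPS multiplicities and map ranks of the traceless complex, with generating functions \(Z^0(x)=\sum_R h^0_Rx^R\) and \(\mathcal R^0_{p,N}(x)=\sum_R r^0_Rx^R\).

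\emph{Rank splitting and the first factor.} By Theorem~\ref{thm:U1}, \(Q_p\) acts on \(\Lambda^\bullet(\C T_0)\otimes\Lambda^\bullet W_0\) as \(\pm 1\otimes Q_p\). The sign is \(-1\) on the occupied \(\chi\) sector, from moving \(Q_p\) past \(\chi\), and it does not affect ranks. Thus \(V_R=\Lambda^RW_0\oplus\chi\Lambda^{R-1}W_0\) and \(Q_R\) is block diagonal, so
\[
r_R=r^0_R+r^0_{R-1},
\]
which is \eqref{eq:trace-rank-splitting}. The same Künneth splitting gives \(\Zbps^{(p,N)}(x)=(1+x)Z^0(x)\).

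\emph{Second factor from the Euler characteristic.} Apply \eqref{eq:cohom} to the traceless complex, \(h^0_R=\dim\Lambda^RW_0-r^0_R-r^0_{R-p}\), and sum with weight \(x^R\):
\[
Z^0(x)=(1+x)^{N^2-1}-(1+x^p)\,\mathcal R^0_{p,N}(x).
\]
Since \(N\ge2\), the first term vanishes at \(x=-1\). Since \(p\) is odd, \(1+x^p=(1+x)(1-x+\cdots+x^{p-1})\), so the second term vanishes there too. Hence \((1+x)\mid Z^0\), and together with the first factor this gives \eqref{eq:structural-factors}. If \(p>2N-1\) then \(Q_p\equiv0\) and \(\Zbps=(1+x)^{N^2}\), so all claims hold trivially; assume \(p\le2N-1\) from now on.

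\emph{Odd \(N\).} The key step is to rerun the argument of Theorem~\ref{thm:rankpal} inside \(\Lambda^\bullet W_0\), using the top-degree pairing \(\Lambda^RW_0\otimes\Lambda^{N^2-1-R}W_0\to\Lambda^{N^2-1}W_0\simeq\C\). By \eqref{eq:trace-decouple}, \(Q_p\) is wedge multiplication by an element of \(\Lambda^pW_0\). The same graded-commutativity computation therefore gives the adjoint relation, and hence
\[
r^0_R=r^0_{N^2-1-p-R}.
\]
Thus \(x^d\mathcal R^0_{p,N}(1/x)=\mathcal R^0_{p,N}(x)\) with \(d=N^2-1-p\). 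When \(N\) is odd, \(N^2-1\) is even, so \(d\) is odd. Evaluating at \(x=-1\) then gives \(\mathcal R^0_{p,N}(-1)=-\mathcal R^0_{p,N}(-1)=0\), so \((1+x)\mid\mathcal R^0_{p,N}\). Combined with the rank splitting, this gives \((1+x)^2\mid\mathcal R_{p,N}\).

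Substituting into the Euler-characteristic identity, \((1+x^p)\mathcal R^0_{p,N}\) is then divisible by \((1+x)^2\), and so is \((1+x)^{N^2-1}\) since \(N^2-1\ge8\). Hence \((1+x)^2\mid Z^0\), and \(\Zbps=(1+x)Z^0\) yields \eqref{eq:structural-factors-odd}. Only the palindromic relation is needed here, not the exact degree of \(\mathcal R^0\), so no nonvanishing input beyond Theorem~\ref{thm:U1} is required.

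The main point to check carefully is this traceless version of the pairing argument. One must confirm that restricting to \(W_0\) (dimension \(N^2-1\) rather than \(N^2\)) preserves nondegeneracy and the sign bookkeeping; this holds because the orthogonal decomposition makes \(\Lambda^\bullet W_0\) an honest exterior algebra on the modes \(b_\alpha^\dagger\). The parity shift from \(N^2-p\) to \(N^2-1-p\) is exactly what produces the extra factor for odd \(N\).
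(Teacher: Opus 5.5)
Your proof is correct and uses the same ingredients as the paper. These are trace decoupling with the rank splitting \(r_R=r^0_R+r^0_{R-1}\), the Euler characteristic of the traceless complex, and the top-degree pairing on \(\Lambda^\bullet W_0\) combined with a parity argument at \(x=-1\). The only difference is the order of the deductions. You obtain \((1+x)\mid\mathcal R^0_{p,N}\) from odd-degree palindromicity of \(\mathcal R^0_{p,N}\), and then deduce \((1+x)^2\mid Z^0\) from \(Z^0=(1+x)^{N^2-1}-(1+x^p)\mathcal R^0_{p,N}\). The paper instead argues separately that the zeros at \(-1\) of the even-degree palindromes \(Z_0\) and \(\mathcal R_{p,N}\) have even multiplicity.
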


\begin{proof}
Let
\[
Z_0(x)=\sum_R \dim H^R\!\left(\Lambda^\bullet W_0,Q_p\right)x^R
\]
be the generating polynomial for the traceless complex.  Theorem~\ref{thm:U1}
gives \(\Zbps(x)=(1+x)Z_0\).  Since \(Q_p\) has odd degree, the value
\(Z_0(-1)\) is the Euler characteristic of the traceless complex.  Hence
\[
Z_0(-1)=\sum_R(-1)^R\binom{N^2-1}{R}=(1-1)^{N^2-1}=0.
\]
This proves~\eqref{eq:structural-factors}.

The pairing in the highest exterior degree, applied to
\(\Lambda^\bullet W_0\), gives
\(Z_0(x)=x^{N^2-1}Z_0(x^{-1})\).  If \(N\) is odd, the exponent
\(N^2-1\) is even.  For a polynomial with this symmetry and an even
exponent, a zero at \(x=-1\) has even multiplicity.
Thus \((1+x)^2\mid Z_0\), which proves
\eqref{eq:structural-factors-odd}.

In degree \(R\), the trace splitting gives
\[
\Lambda^R(W_0\oplus\C T_0)
=\Lambda^R W_0\oplus
\bigl(\chi\wedge\Lambda^{R-1}W_0\bigr).
\]
The differential has the two blocks \(Q_R^0\) and \(-Q_{R-1}^0\).
Writing \(s_R=\operatorname{rank}Q_R^0\), we obtain
\(\operatorname{rank}Q_R=s_R+s_{R-1}\), which proves
\eqref{eq:trace-rank-splitting}.  When \(N\) is odd, rank symmetry makes
\(\mathcal R_{p,N}\) reciprocal of even degree \(N^2-p\).  Its zero at
\(-1\) therefore has even multiplicity, which supplies the second
rank factor.
\end{proof}

The supercharge is nonzero only when \(p\le2N-1\).  For \(p>2N-1\),
the trace \(\tr(A^p)\) vanishes identically
\cite{Itoh2015}.  Thus \(p=3,5,7\) first occur at \(N=2,3,4\),
respectively.  In particular, \(Q_7=0\) at \(N=3\), so every state is
BPS.  Appendix~\ref{app:cohom-background} explains the same condition in terms of
primitive cocycles \cite{ChevalleyEilenberg,Koszul1950}.

When the cochain description is used, each residue class \(a\in \Z/p\Z\) has Euler characteristic
\begin{equation}\label{eq:index}
I_a(N^2)=\sum_{R\equiv a\, (\mathrm{mod}\,p)}(-1)^R\binom{N^2}{R}
=\frac{1}{p}\sum_{\ell=0}^{p-1}\omega^{-a\ell}(1-\omega^{\ell})^{N^2},
\qquad \omega=e^{2\pi i/p}.
\end{equation}
These are the Witten indices of the \(p\) complexes obtained by fixing
fermion number modulo \(p\).  They count BPS states with signs and can
be computed from binomial coefficients without knowing the spectrum.
Since \(p\) is odd, \((-1)^{R+p}=-(-1)^R\).  On summing
\eqref{eq:cohom} over \(R\equiv a\pmod p\) with weights \((-1)^R\),
each \(\operatorname{rank}Q_R\) occurs twice with opposite signs and
cancels.  Hence
\(I_a(N^2)=\sum_{R\equiv a\, (\mathrm{mod}\,p)}(-1)^R h_R\).
Their absolute values give an exact lower bound on the total BPS count,
\begin{equation}\label{eq:indexfloor}
\sum_R h_R\ge \sum_{a=0}^{p-1}|I_a|.
\end{equation}
At fixed~\(N\), we record the difference between the computed count and
this bound.
This notion is separate from fortuity under changes of rank.

Equation~\eqref{eq:index} determines this bound whenever \(p\le2N-1\).
For fixed \(p\), at large~\(N\) it grows exponentially in~\(N^2\) with rate \(\log\bigl(2\cos(\pi/(2p))\bigr)\).
It therefore gives a rigorous lower bound on \(Z_{\rm BPS}\) with the same rate.
The derivation is given in Section~\ref{sec:largeN}.

\section{Hamiltonian after normal ordering}
\label{sec:normalH}

The cohomological description determines the multiplicities at zero
energy without displaying the interactions in the Hamiltonian.  We now
expand the reduced Hamiltonian after normal ordering.  We first derive a
formula valid for every odd \(p\).  We then obtain \(K_0,K_1,K_2\) for
\(p=5\) and every \(N\ge3\), and compare the commuting terms at \(N=3\)
with the nonzero commutators found at \(N=4\).

\subsection{General formula for odd \texorpdfstring{$p$}{p}}

The reduced supercharge creates from the vacuum an invariant state of
\(p\) fermions under \(\SU(N)\),
\[
|\Omega_p\rangle:=\widetilde Q_p|0\rangle .
\]
For any state \(|\Phi\rangle\) in the Fock space,
\begin{equation}\label{eq:add-remove}
\langle\Phi|\widetilde H|\Phi\rangle
=\|\widetilde Q_p|\Phi\rangle\|^2
+\|\widetilde Q_p^\dagger|\Phi\rangle\|^2 .
\end{equation}
Thus the Hamiltonian measures whether this invariant state of \(p\)
fermions can be added to or removed from \(|\Phi\rangle\).  A BPS
state allows neither process.

We now separate these processes by the number of creation and
annihilation operators.  A term labelled by \(k\) contains \(k\) of
each.
As in supersymmetric SYK models whose supercharge is built from monomials
of degree \(p\) in the fermions~\cite{FuGaiotto}, the largest possible interaction contains
\(2p-2\) fermion operators.
Here the couplings are fixed by the trace and by \(\SU(N)\) symmetry
rather than chosen randomly.
For standard conventions in second quantization and normal ordering, see
Ref.~\cite{FetterWalecka1971}.

Let \(W=\mathrm{Mat}_N(\C)\) with the Hilbert--Schmidt inner product.
Use the orthonormal Hermitian basis
\(\{T_0,T_1,\ldots,T_{N^2-1}\}\) fixed in
Section~\ref{sec:fock}, with
\(T_0=\mathbf 1/\sqrt N\) and \(T_\alpha\in W_0\), and write
\[
\Psi=\sum_{\mu=0}^{N^2-1}a_\mu^\dagger T_\mu,
\qquad a_0^\dagger=\chi,\qquad a_\alpha^\dagger=b_\alpha^\dagger,
\qquad a_\mu=(a_\mu^\dagger)^\dagger,
\qquad \{a_\mu,a_\nu^\dagger\}=\delta_{\mu\nu}.
\]
The invariant complex bilinear trace pairing
\(B(X,Y)=\tr(XY)\) also satisfies \(B(T_\mu,T_\nu)=\delta_{\mu\nu}\).
It therefore identifies \(W\) with \(W^*\) equivariantly under
conjugation by \(\SU(N)\).  We use this identification below.
If \(I=\{i_1<\cdots<i_k\}\), define
\[
a_I^\dagger:=a_{i_1}^\dagger\cdots a_{i_k}^\dagger,
\qquad
a_I:=(a_I^\dagger)^\dagger=a_{i_k}\cdots a_{i_1}.
\]
By Theorem~\ref{thm:U1}, the reduced supercharge of \eqref{eq:Qreduced} is
\begin{equation}\label{eq:omega-expansion}
\widetilde Q_p=\frac{1}{p}\tr(A^p)
=\sum_{|S|=p}(\Omega_p)_S\,a_S^\dagger .
\end{equation}
In the chosen Hermitian basis, complex conjugation reverses the order
inside each trace.  Therefore
\[
\overline{(\Omega_p)_S}
=(-1)^{p(p-1)/2}(\Omega_p)_S .
\]
Thus the components are real for \(p\equiv1\pmod4\) and purely imaginary
for \(p\equiv3\pmod4\).
Under the identification defined by the trace pairing, the coefficients
\((\Omega_p)_S\) are the components of an \(\SU(N)\)-invariant form
\(\Omega_p\in\Lambda^pW^*\).  Exact decoupling of the trace mode implies that
this form is supported on \(W_0\).

Equip the exterior powers of \(W\) and \(W^*\) with their induced
Hermitian inner products.  The following map records the
\((p-k)\)-particle wavefunction left after removing \(k\) particles from
\(|\Omega_p\rangle\):
\begin{equation}\label{eq:contraction-map}
\mathcal A_k:\Lambda^kW\longrightarrow\Lambda^{p-k}W^*,\qquad
\mathcal A_k(\alpha)=\iota_\alpha\Omega_p ,
\qquad 0\le k\le p .
\end{equation}
Here \(\iota_\alpha\Omega_p\) denotes contraction by the multivector
\(\alpha\).  Thus, if \(\alpha=v_1\wedge\cdots\wedge v_k\), then
\((\iota_\alpha\Omega_p)(w_1,\ldots,w_{p-k})
=\Omega_p(v_1,\ldots,v_k,w_1,\ldots,w_{p-k})\), extended linearly.
For disjoint index subsets \(I,T\), each written in increasing order,
the matrix elements of \(\mathcal A_k\) are
\[
(\mathcal A_k)_{T,I}
=\epsilon(T,I)(\Omega_p)_{T\cup I},
\qquad |I|=k,\quad |T|=p-k,
\]
where \(\epsilon(T,I)\) is the sign that orders the concatenated wedge product.
The form \(\Omega_p\) vanishes whenever one of its arguments lies in the trace direction.
The common phase of its components implies that
\(\mathcal A_k^\dagger\mathcal A_k\) is real symmetric.

The contraction matrices have a direct interpretation in terms of
correlations.  With the ordering of subsets fixed above, define
\begin{equation}\label{eq:rdm-kernel}
\Gamma^{(k)}_{I,J}
:=\langle\Omega_p|a_I^\dagger a_J|\Omega_p\rangle
=(\mathcal A_k^\dagger\mathcal A_k)_{I,J}.
\end{equation}
Thus \(\Gamma^{(k)}\) is the unnormalized reduced density matrix for
\(k\) particles in \(|\Omega_p\rangle\), in the index convention of
\eqref{eq:rdm-kernel}~\cite{Coleman1963}, and
\begin{equation}\label{eq:rdm-trace}
\tr\Gamma^{(k)}
=\binom{p}{k}\|\Omega_p\|^2 .
\end{equation}
The operator \(K_k\) applies this correlation matrix to the full Fock
space:
\begin{equation}\label{eq:Kk-definition}
K_k:=
\sum_{\substack{|I|=k\\|J|=k}}
(\mathcal A_k^\dagger\mathcal A_k)_{I,J}\,
a_I^\dagger a_J .
\end{equation}
The strings \(a_I^\dagger a_J\), with \(|I|=|J|\), form a basis of
operators after normal ordering that preserve fermion number.  The
expansion is therefore complete.  In particular, \(K_k\) is the
contribution with \(k\) creation and \(k\) annihilation operators.

For a single monomial the alternating structure has a simple
interpretation in terms of Pauli blocking.
If \(q=a_1^\dagger\cdots a_p^\dagger\) and
\(n_i=a_i^\dagger a_i\), then
\begin{equation}\label{eq:monomial-blocking}
\{q,q^\dagger\}
=\prod_{i=1}^p n_i+\prod_{i=1}^p(1-n_i).
\end{equation}
The two terms project onto the configurations in which all \(p\) modes
are occupied or all are empty.
Expanding the second product gives an alternating sum in the occupied
modes.  For odd \(p\), its term of degree \(p\) cancels the first product.
The following proposition is the corresponding statement for the
supercharge \(Q_p\) defined by one trace.

\begin{proposition}[Hamiltonian after normal ordering]\label{prop:normalH}
For every odd \(p\),
\begin{equation}\label{eq:normalH}
\widetilde H=\{\widetilde Q_p,\widetilde Q_p^\dagger\}
=\sum_{k=0}^{p-1}(-1)^k K_k .
\end{equation}
Each \(K_k\) is positive semidefinite, preserves fermion number, and is separately \(\SU(N)\)-invariant.
Equivalently,
\(H=p^2\sum_{k=0}^{p-1}(-1)^kK_k\).
\end{proposition}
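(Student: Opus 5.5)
Write \(\widetilde Q_p=\sum_{|S|=p}\Omega_S\,a_S^\dagger\) as in \eqref{eq:omega-expansion}.  Then
\[
\widetilde H=\sum_{S,S'}\Omega_S\overline{\Omega_{S'}}\,\{a_S^\dagger,a_{S'}\},
\]
and the plan is to normal order each anticommutator \(\{a_S^\dagger,a_{S'}\}\) with a Wick contraction expansion.

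For two ordered monomials of odd degree \(p\), the product \(a_{S'}a_S^\dagger\) normal orders into a signed sum over \(k\)-element sets of contractions.  Each contraction pairs an index in \(S'\) with an equal index in \(S\).  The term with \(j\) contractions leaves \(p-j\) creation and \(p-j\) annihilation operators.  The term \(a_S^\dagger a_{S'}\) is already normal ordered and carries \(p\) of each, with \(j=0\).

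In the anticommutator, the two fully uncontracted terms \(a_S^\dagger a_{S'}\) and the normal-ordered \(a_{S'}a_S^\dagger\) carry opposite signs, since reordering two odd strings of length \(p\) costs \((-1)^{p^2}=-1\).  They therefore cancel.  This is the analogue of the cancellation of the degree-\(p\) term in \eqref{eq:monomial-blocking}, and it explains why the sum stops at \(k=p-1\).

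For the remaining terms, set \(k=p-j\), the number of surviving creation and annihilation operators.  Write \(S=T\cup I\) and \(S'=T\cup J\), where \(T\) is the set of contracted indices, \(|T|=p-k\), and \(|I|=|J|=k\).  The operator that remains is \(a_I^\dagger a_J\).

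The coefficient is \(\sum_T\epsilon(T,I)\epsilon(T,J)\,\Omega_{T\cup I}\overline{\Omega_{T\cup J}}\).  By the matrix elements of \(\mathcal A_k\) given above, this is \((\mathcal A_k^\dagger\mathcal A_k)_{I,J}\) up to complex conjugation.  The common phase of the components makes this matrix real symmetric, so the conjugation is harmless.

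The sign needs separate attention.  Moving the \(p-k\) contracted annihilators through the uncontracted ones and contracting them pairwise produces \((-1)^{k}\) relative to the ordering in \(\epsilon(T,I)\epsilon(T,J)\).  I expect this sign bookkeeping to be the main obstacle.  I would fix it cleanly by checking the formula against the single monomial \(q=a_1^\dagger\cdots a_p^\dagger\).  There \(\prod_i(1-n_i)\) expands as \(\sum_k(-1)^k e_k(n)\), and each \(e_k(n)\) normal orders to \(\sum_{|I|=k}a_I^\dagger a_I\) up to the same sign convention.  Since the contraction combinatorics is bilinear and independent of the coefficients, the sign computed in the diagonal monomial case determines the general sign.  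A small polarization argument covers \(S\ne S'\).

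Positive semidefiniteness of \(K_k\) follows from writing
\[
K_k=\sum_{|T|=p-k}B_T^\dagger B_T,\qquad B_T:=\sum_{|I|=k}\overline{(\mathcal A_k)_{T,I}}\,a_I ,
\]
up to the conjugation convention.  Indeed \(\sum_T(\mathcal A_k^\dagger)_{I,T}(\mathcal A_k)_{T,J}\,a_I^\dagger a_J=\sum_T B_T^\dagger B_T\).  Each \(K_k\) has equal numbers of creation and annihilation operators, so it preserves fermion number.

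For \(\SU(N)\)-invariance, \(\Omega_p\) is an invariant form and contraction is equivariant.  Hence \(\mathcal A_k^\dagger\mathcal A_k\) commutes with the induced unitary action on \(\Lambda^kW\).  The operator \(K_k\) is then the second-quantized image of an invariant kernel, so it commutes with the \(\SU(N)\) generators.

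Alternatively, note that \(\widetilde H\) is invariant and that the decomposition by operator length \(k\) is unique, because the strings \(a_I^\dagger a_J\) form a basis.  Each graded piece of an invariant operator is then invariant.  Finally, multiplying by \(p^2\) using \eqref{eq:Hreduced} gives the stated form of \(H\).
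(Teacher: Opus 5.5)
Your proposal is essentially the paper's proof: you normal order \(\widetilde Q_p^\dagger\widetilde Q_p\), cancel its uncontracted length-\(p\) term against \(\widetilde Q_p\widetilde Q_p^\dagger\) using \((-1)^{p^2}=-1\), identify the coefficient of \(a_I^\dagger a_J\) with an entry of \(\mathcal A_k^\dagger\mathcal A_k\) up to conjugation, remove the conjugation by real symmetry, and then deduce positivity and invariance; your \(\sum_T B_T^\dagger B_T\) spells out the paper's one-line Gram-form remark. The only soft spot is the sign, which you should compute directly as the paper does rather than through the monomial check, since that check fixes only the diagonal \(I=J\) entries and a Hermitian form is not determined by its diagonal: with \(a_{S'}=\epsilon(T,J)\,a_Ja_T\) and \(a_S^\dagger=\epsilon(T,I)\,a_T^\dagger a_I^\dagger\), the nested \(T\)-block contracts with sign \(+1\), and \((-1)^k=(-1)^{k^2}\) comes from passing the \(k\) remaining annihilators through the \(k\) remaining creators, not from moving contracted annihilators past uncontracted ones, which costs \((-1)^{k(p-k)}=+1\) for odd \(p\).
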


\begin{proof}
For \(0\le k\le p-1\), only the normal ordering of
\(\widetilde Q_p^\dagger\widetilde Q_p\) contributes a term with
\(k\) creation and \(k\) annihilation operators.  Choosing the
\(p-k\) contracted modes and moving the \(k\) remaining creation operators
past the \(k\) remaining annihilation operators gives
\((-1)^{k^2}=(-1)^k\).  Hence, for \(|I|=|J|=k\),
\[
\begin{aligned}
\operatorname{coeff}_{a_I^\dagger a_J}\widetilde H
&=(-1)^{k^2}
\sum_{\substack{|T|=p-k\\T\cap(I\cup J)=\varnothing}}
\overline{\epsilon(T,J)(\Omega_p)_{T\cup J}}\,
\epsilon(T,I)(\Omega_p)_{T\cup I}\\
&=(-1)^k(\mathcal A_k^\dagger\mathcal A_k)_{J,I}
=(-1)^k(\mathcal A_k^\dagger\mathcal A_k)_{I,J},
\qquad 0\le k\le p-1 .
\end{aligned}
\]
The last equality uses the real symmetry proved above.  For \(k=p\),
the term from \(\widetilde Q_p^\dagger\widetilde Q_p\) is
\((-1)^{p^2}\) times the term from
\(\widetilde Q_p\widetilde Q_p^\dagger\), so the two cancel because
\(p\) is odd.
This proves \eqref{eq:normalH}.
The Gram form makes each \(K_k\) positive semidefinite, and invariance follows from that of \(\Omega_p\).
\end{proof}

\subsection{The quintic Hamiltonian}

In the quintic model, \(K_0\) is the constant term.  The operators
\(K_1\) and \(K_2\) contain respectively one and two creation operators,
together with the same number of annihilation operators, in the
expansion obtained by normal ordering \(\widetilde H\).
All three are fixed analytically by \(\SU(N)\) representation theory,
normalization, and the contraction identities for the primitive form.
Let \(\hat C_2\) be the quadratic Casimir for the conjugation action of
\(\SU(N)\) on the full Fock space.  The trace mode is an \(\SU(N)\)
singlet, and \(\hat C_2\) is normalized to have eigenvalue \(N\) on the
adjoint representation.
For \(p=5\) and every \(N\ge 3\), the following identities hold as
operators on the full Fock space:
\begin{align}
K_0&=\frac{N(N^2-1)(N^2-4)}{5},\label{eq:K0p5}\\
K_1&=N(N^2-4)\,\hat n,\label{eq:K1p5}\\
K_2&=N\hat n(\hat n-1)
 +(N^2-6)(N\hat n-\hat C_2).\label{eq:K2p5}
\end{align}
All adjoints and norms below use the Fock inner product induced by the
Hilbert--Schmidt inner product on \(W\).
The scale of the invariant form is fixed by
\[
\widetilde Q_5=\frac{1}{5}\tr(A^5)
=\frac{1}{5}\tr(\Psi^5).
\]
Its coefficients in the original wedge basis built from matrix units are
integers by \eqref{eq:Qreduced}.
For comparison with Ref.~\cite{deAzcarragaMacfarlaneCocycles}, set
\(\lambda_a=\sqrt2\,T_a\), so that
\(\tr(\lambda_a\lambda_b)=2\delta_{ab}\), as in that reference.
Let \(f_{abc}\) be the real structure constants defined by
\([\lambda_a,\lambda_b]=2i f_{abc}\lambda_c\), equivalently
\([T_a,T_b]=i\sqrt2 f_{abc}T_c\).
We use normalized antisymmetrization,
\[
T_{[a_1\cdots a_5]}:=\frac{1}{5!}
\sum_{\sigma\in S_5}\operatorname{sgn}(\sigma)
T_{a_{\sigma(1)}}\cdots T_{a_{\sigma(5)}}.
\]
Let \(\Omega^{\rm lit}\) denote the invariant cocycle of degree five in
the normalization used in Ref.~\cite{deAzcarragaMacfarlaneCocycles}.
With this convention, equation~(117) of that reference gives
\(\tr\lambda_{[a_1\cdots a_5]}=-2\Omega^{\rm lit}_{a_1\cdots a_5}\).
Collecting terms in
\(\tr(A^5)/5\) into increasing index sets therefore gives
\begin{equation}\label{eq:omega-normalization}
(\Omega_5)_{a_1\cdots a_5}
=\frac{5!}{5}\tr T_{[a_1\cdots a_5]}
=-6\sqrt2\,\Omega^{\rm lit}_{a_1\cdots a_5},
\qquad a_1<\cdots<a_5 .
\end{equation}
After this normalization, equations~(73), (76), (82), and (84) of that
reference give
\begin{align}
\mathcal A_1^\dagger\mathcal A_1
&=N(N^2-4)\,\mathbf 1_{\rm adj},\label{eq:A1contraction}\\
\mathcal A_2^\dagger\mathcal A_2
&=N(N^2-4)P_{\rm adj}+2N P_{\perp}.\label{eq:A2contraction}
\end{align}
The first identity holds on \(W_0\), and the second on
\(\Lambda^2W_0\).  Here \(P_{\rm adj}\) projects onto the adjoint summand
and \(P_\perp\) onto its orthogonal complement
~\cite{ChryssomalakosEtAl1999,deAzcarragaMacfarlaneCocycles}.
In translating equation~(84), we used
\((P_{\rm adj})_{ab,cd}=2N^{-1}f_{abe}f_{cde}\) in the wedge basis
indexed by increasing pairs \(a<b\) and \(c<d\).
The first identity gives \eqref{eq:K1p5}.
Since \(K_0=\lVert\Omega_5\rVert^2\) and
\(\tr(\mathcal A_1^\dagger\mathcal A_1)=5\lVert\Omega_5\rVert^2\),
taking the trace gives \eqref{eq:K0p5}.
For \(N\ge3\), the orthogonal complement of the adjoint representation
in \(\Lambda^2W_0\) is a conjugate pair of irreducible representations.
Indeed, writing \(W=V\otimes V^*\) with \(V=\C^N\), one has
\[
\Lambda^2(V\otimes V^*)
\simeq
(S^2V\otimes\Lambda^2V^*)
\oplus(\Lambda^2V\otimes S^2V^*)
\simeq 2\,{\rm adj}\oplus V_\lambda\oplus\overline{V_\lambda},
\]
where \(V_\lambda\) is the irreducible representation with highest
weight \(\lambda=(2,0,\ldots,0,-1,-1)\), written in the standard
\(N\)-tuple notation, and \(\overline{V_\lambda}\) is its conjugate.
Since \(V\otimes V^*\simeq\C\oplus W_0\),
it follows that
\[
\Lambda^2W_0\simeq
{\rm adj}\oplus V_\lambda\oplus\overline{V_\lambda}.
\]
For a highest weight \(\ell=(\ell_1,\ldots,\ell_N)\), with
\(\sum_i\ell_i=0\), the quadratic Casimir in the present normalization
is
\[
C_2(\ell)=\frac12\sum_{i=1}^N
\ell_i\bigl(\ell_i+N+1-2i\bigr).
\]
This gives \(N\) for the adjoint weight
\((1,0,\ldots,0,-1)\) and \(2N\) for \(\lambda\) and its conjugate.
Since \(W=W_0\oplus\C T_0\),
\[
\Lambda^2W=\Lambda^2W_0\oplus(T_0\wedge W_0).
\]
On \(\Lambda^2W_0\), \(\hat n=2\), and the right side of
\eqref{eq:K2p5} has eigenvalues \(N(N^2-4)\) and \(2N\) on the
adjoint and its complement.
On \(T_0\wedge W_0\), \(\mathcal A_2^\dagger\mathcal A_2=0\).
There \(\hat n=1\) and \(\hat C_2=N\), so the right side also
vanishes.
Each generator of the conjugation action contains one creation and one
annihilation operator.  Since \(\hat C_2\) is quadratic in these
generators, its expansion after normal ordering contains no term with
more than two creation and two annihilation operators.
After normal ordering, \(\hat n(\hat n-1)\) contains only terms with two
creation and two annihilation operators.  On the traceless sector with
one fermion, \(\hat C_2=N\mathbf 1\), while it vanishes on the trace mode.
Its part with one creation and one annihilation operator is therefore
\(N\hat n\), so \(N\hat n-\hat C_2\) also contains only terms with two
creation and two annihilation operators.  For any such operator, its
matrix elements in the ordered basis \(a_I^\dagger|0\rangle\) of
\(\Lambda^2W\) are exactly the coefficients multiplying
\(a_I^\dagger a_J\).  Its action on \(\Lambda^2W\)
therefore determines the operator on the full Fock space.  Since the two
sides of \eqref{eq:K2p5} agree on both summands of \(\Lambda^2W\), the
identity follows.

Thus the contribution for one particle treats every traceless mode equally.
For a pair, the strength is \(N(N^2-4)\) in the adjoint channel and
\(2N\) in its complement.
These are exact operator identities for every \(N\ge3\).

It follows that
\begin{equation}\label{eq:Hlowp5}
\widetilde H_{\le2}:=K_0-K_1+K_2
=\frac{N(N^2-1)(N^2-4)}{5}
+N\hat n(\hat n-3)-(N^2-6)\hat C_2 .
\end{equation}
Because \(\Omega_p\) vanishes on the trace direction,
\(\mathcal A_k^\dagger\mathcal A_k\) is supported on
\(\Lambda^kW_0\).  Hence \(K_k\) commutes with \(n_\chi\).
Together with conservation of the total fermion number, this gives
\([K_k,\hat n]=0\).
Its \(\SU(N)\)-invariance also gives \([K_k,\hat C_2]=0\).
For \(p=5\) and every \(N\ge3\), the identities
\eqref{eq:K0p5}--\eqref{eq:K2p5} and the commutation relations above
show that each of \(K_0\), \(K_1\) and \(K_2\) commutes with every
\(K_j\).  Their combination \(\widetilde H_{\le2}\) therefore also
commutes with every \(K_j\).  Consequently, the only commutator among
the five terms that can be nonzero is \([K_3,K_4]\).
For \(p=5\),
\[
\widetilde H=\widetilde H_{\le2}-K_3+K_4.
\]
Thus only \(K_3\) and \(K_4\) can contain spectral information not fixed
by the traceless fermion number and the quadratic Casimir.
At fixed \(\hat n\), \(\widetilde H_{\le2}\) is scalar on each
\(\SU(N)\) irreducible representation.
An invariant term with more creation and annihilation operators can
nevertheless act on the multiplicity
space when the same representation occurs more than once.
This is how \(K_3\) and \(K_4\) can split states
with identical \((\hat n,\hat C_2)\).

For comparison, consider the cubic model.  Write \(K_k^{(3)}\) for the
corresponding cubic operators.
For \(a<b<c\), the cubic form has components
\[
(\Omega_3)_{abc}=\tr\!\bigl(T_a[T_b,T_c]\bigr)
=i\sqrt2\,f_{abc}.
\]
The identity \(f_{acd}f_{bcd}=N\delta_{ab}\) therefore gives
\[
\bigl(\mathcal A_1^{(3)}\bigr)^\dagger\mathcal A_1^{(3)}
=N\mathbf1_{\rm adj}.
\]
This gives \(K_1^{(3)}=N\hat n\).  Taking the trace and using
\eqref{eq:rdm-trace} gives
\(K_0^{(3)}=\|\Omega_3\|^2=N(N^2-1)/3\).
On \(\Lambda^2W_0\), the same identity for the structure constants gives
\[
\bigl(\mathcal A_2^{(3)}\bigr)^\dagger\mathcal A_2^{(3)}
=NP_{\rm adj}=2N\mathbf1-\hat C_2.
\]
On \(T_0\wedge W_0\), both
\(\bigl(\mathcal A_2^{(3)}\bigr)^\dagger\mathcal A_2^{(3)}\) and
\(N\hat n-\hat C_2\) vanish.  Both \(K_2^{(3)}\) and
\(N\hat n-\hat C_2\) contain only terms with two creation and two
annihilation operators, so their agreement on the two summands of
\(\Lambda^2W\) gives \(K_2^{(3)}=N\hat n-\hat C_2\) on the full Fock
space.
Thus
\begin{equation}\label{eq:cubic-components}
K_0^{(3)}=\frac{N(N^2-1)}{3},\qquad
K_1^{(3)}=N\hat n,\qquad
K_2^{(3)}=N\hat n-\hat C_2 .
\end{equation}
Normal ordering stops after the terms with two creation and two
annihilation operators, and the contributions from the fermion number
operator cancel:
\begin{equation}\label{eq:cubic-cancellation}
H_3=9(K_0^{(3)}-K_1^{(3)}+K_2^{(3)})
=3N(N^2-1)-9\hat C_2 .
\end{equation}
This recovers \eqref{eq:cubicH}.
The cubic Hamiltonian is determined by the quadratic Casimir because
the \(\hat n\) terms in \(K_1^{(3)}\) and \(K_2^{(3)}\) cancel, and no
\(K_k^{(3)}\) with \(k>2\) appears in \(H_3\).

Since \(K_k\) annihilates states with fewer than \(k\) traceless
fermions, the reduced quintic Hamiltonian satisfies
\(\widetilde H=\widetilde H_{\le2}\) whenever \(\hat n\le2\).
As a check, at \((5,3)\) there are no zero modes in \(V_0\) or \(V_1\),
while
\[
V_2=(\mathbf{10}\oplus\overline{\mathbf{10}})
\oplus2\cdot\mathbf{8},\qquad
H|_{V_2}=25(18-3\hat C_2).
\]
Here one \(\mathbf8\) lies in \(\Lambda^2W_0\), where \(\hat n=2\),
and the other lies in \(T_0\wedge W_0\), where \(\hat n=1\).
The displayed formula applies to both because
\(\hat n(\hat n-3)=-2\) for either value.
The relevant Casimir eigenvalues are \(\hat C_2(\mathbf 8)=3\) and
\(\hat C_2(\mathbf{10})=\hat C_2(\overline{\mathbf{10}})=6\).
This gives \(20\) BPS states and \(16\) states at \(E=225\), explaining
the onset \(h_2=20\) in \eqref{eq:h53}.

\subsection{Commutators at \texorpdfstring{$N=3$}{N=3} and
\texorpdfstring{$N=4$}{N=4}}
\label{sec:computed-commutators}

\begin{proposition}[Commutativity at \((5,3)\)]
\label{prop:su3-commutativity}
For the quintic model at \(N=3\), the five operators
\(K_0,\ldots,K_4\) commute pairwise on the full Fock space.
This follows analytically from Hodge duality between the invariant
cubic and quintic forms on \(\mathfrak{su}(3)\).
\end{proposition}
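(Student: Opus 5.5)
The plan is to reduce Proposition~\ref{prop:su3-commutativity} to a statement about the cubic model, whose Hamiltonian \eqref{eq:cubicH} is already a function of \(\hat C_2\). From the discussion after \eqref{eq:Hlowp5}, \(K_0,K_1,K_2\) commute with every \(K_j\) for all \(N\ge3\). Hence only \([K_3,K_4]=0\) remains to be shown at \(N=3\). The trace mode drops out by Theorem~\ref{thm:U1}, since every \(\mathcal A_k^\dagger\mathcal A_k\) is supported on \(\Lambda^kW_0\). We may therefore work on \(\Lambda^\bullet W_0\) with \(\dim W_0=8\).

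\emph{Step 1 (Hodge duality of the forms).} On \(\mathfrak{su}(3)\), the invariant forms in \(\Lambda^\bullet W_0^*\) form an exterior algebra on the primitive cocycles of degrees \(3\) and \(5\). The space of invariant \(5\)-forms is therefore one dimensional and spanned by \(\Omega_5\). The Hodge star \(\star\) for the Hilbert--Schmidt metric and the invariant volume form is \(\SU(3)\)-equivariant and maps degree \(3\) to degree \(8-3=5\). Hence \(\star\Omega_3=c\,\Omega_5\) for some scalar \(c\). We check \(c\neq0\), either from \(\star\) being an isomorphism or by one explicit component using the \(d\)- and \(f\)-symbols. \(|c|\) can be fixed from \(\|\Omega_3\|^2=N(N^2-1)/3=8\) and \(\|\Omega_5\|^2=K_0=N(N^2-1)(N^2-4)/5=24\), since \(\star\) is an isometry.

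\emph{Step 2 (particle--hole transform of the supercharge).} Let \(J\) be the unitary particle--hole map of Section~\ref{sec:fock}, restricted to \(\Lambda^\bullet W_0\). It intertwines contraction by a form with wedge by its Hodge dual, \(J\,\iota_{\Omega_3}J^{-1}=\pm\,(\star\Omega_3)\wedge\). Therefore \(\widetilde Q_5=c'\,J\,\widetilde Q_3^{\dagger}J^{-1}\) with \(|c'|\) fixed by Step 1, and
\[
\widetilde Q_5^\dagger\widetilde Q_5=|c'|^2 J\,\widetilde Q_3\widetilde Q_3^\dagger J^{-1},\qquad
\widetilde Q_5\widetilde Q_5^\dagger=|c'|^2 J\,\widetilde Q_3^\dagger\widetilde Q_3 J^{-1}.
\]
Since \(J\) commutes with \(\SU(3)\) and maps \(\hat n\) to \(8-\hat n\), this already gives \(\widetilde H=|c'|^2J\widetilde H_3J^{-1}\). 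The right side is an affine function of \(\hat C_2\). Consequently \(-K_3+K_4=\widetilde H-\widetilde H_{\le2}\) is a function of \(\hat n\) and \(\hat C_2\).

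\emph{Step 3 (separating \(K_3\) from \(K_4\)) — the main obstacle.} Knowing only the combination \(-K_3+K_4\) is not enough. The \(K_k\) arise solely from normal ordering \(\widetilde Q_5^\dagger\widetilde Q_5\). Under \(J\) this operator becomes \(|c'|^2\widetilde Q_3\widetilde Q_3^\dagger\). The latter normal-orders into the cubic pieces \(K^{(3)}_0,\dots,K^{(3)}_3\) together with a term of three creation and three annihilation operators. Conjugation by \(J\) sends a normal-ordered string \(a_I^\dagger a_J\) to an anti-normal-ordered string on complementary index sets. Re-normal-ordering with Wick's theorem redistributes it over all lower \(K\)-levels, with coefficients polynomial in \(\hat n\).

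I would first try to show that the graded pieces of \(\widetilde Q_3\widetilde Q_3^\dagger\) and \(\widetilde Q_3^\dagger\widetilde Q_3\) are each built from \(\hat n\), \(\hat C_2\) and the single operator \(P:=\widetilde Q_3\widetilde Q_3^\dagger\). The relations \(P\,\widetilde Q_3^\dagger\widetilde Q_3=\widetilde Q_3^\dagger\widetilde Q_3\,P=0\) and the \(\SU(3)\)-invariance of \(P\) make such an algebra commutative. Transporting by \(J\) would then express \(K_3\) and \(K_4\) individually as polynomials in \(\hat n,\hat C_2\) and \(JPJ^{-1}\), which gives \([K_3,K_4]=0\). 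Whether the Wick re-expansion really stays inside this commutative algebra is the delicate point.

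If that fails, the fallback uses that \(K_4\) is the second quantization of the Gram operator \(G_4=\mathcal A_4^\dagger\mathcal A_4\) on \(\Lambda^4W_0\). This operator has rank at most \(8\), and by the duality of Step 1 its image is \(\star\bigl(W_0\wedge\Omega_3\bigr)\). \(G_3\) is similarly the Gram operator of \(\alpha\mapsto\alpha\wedge\Omega_3\) on \(\Lambda^3W_0\), via \(\star\). I would then decompose \(\Lambda^3W_0\) and \(\Lambda^4W_0\) into \(\SU(3)\) isotypic components and check, by Schur's lemma, that \(G_3\) and \(G_4\) are scalar on every isotypic block where the relevant multiplicities exceed one. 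This would reduce the commutator on each fermion-number sector to commuting Casimir-type operators.
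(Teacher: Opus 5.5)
Your reduction to $[K_3,K_4]=0$ and the duality $\Omega_5=c\,{*}\Omega_3$ with $|c|^2=3$ are correct. Step~2, however, is false, and Step~3 inherits the error.

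The particle--hole map $J$ sends $\Lambda^kW_0$ to $\Lambda^{8-k}W_0$. So $J\widetilde Q_3^\dagger J^{-1}$ raises fermion number by $3$, not $5$; by particle--hole symmetry it is a phase times $\widetilde Q_3$. Conjugation by $J$ exchanges wedge and contraction by the \emph{same} form and never replaces a form by its Hodge dual. Your identity $J\iota_{\Omega_3}J^{-1}=\pm({*}\Omega_3)\wedge$ therefore fails on degree grounds, and with it $\widetilde Q_5=c'J\widetilde Q_3^\dagger J^{-1}$.

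The consequence is visibly wrong. $3J\widetilde H_3J^{-1}=3(8-\hat C_2)$ gives $15$ on a traceless one-particle state, whereas there $\widetilde H=K_0-K_1=24-15=9$, the level $E=225$.

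The Hodge star must act on the coefficient tensor, and the correct identity concerns the first-quantized contraction maps. For $\alpha\in\Lambda^kW_0$,
\[
\mathcal A_k^{(5)}\alpha=\iota_\alpha\Omega_5=c\,\iota_\alpha({*}\Omega_3)=\pm c\,{*}(\Omega_3\wedge\alpha).
\]
Unitarity of ${*}$ then gives $\mathcal A_k^{(5)\dagger}\mathcal A_k^{(5)}=3\,B|_{\Lambda^kW_0}$ with $B=\widetilde Q_3^\dagger\widetilde Q_3$.

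This is the missing key lemma, and with it no Wick re-expansion is needed. Your first idea in Step~3 is exactly the right second ingredient. By \eqref{eq:cubic-components} and the sign $(-1)^9$ of the three-body term, the normal-ordered pieces of $B$ are $B_{[0]}=8$, $B_{[1]}=-3\hat n$, $B_{[2]}=3\hat n-\hat C_2$ and $B_{[3]}=-\widetilde Q_3\widetilde Q_3^\dagger$.

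What links these pieces to $K_k$ is a lifting rule, not conjugation by $J$. The $k$-body operator whose $k$-particle matrix is the restriction of the $j$-body operator $B_{[j]}$ equals $\binom{\hat n-j}{k-j}B_{[j]}$. Hence $K_k=3\sum_j\binom{\hat n-j}{k-j}B_{[j]}$ is a polynomial in the commuting operators $\hat n$, $\hat C_2$ and $\widetilde Q_3\widetilde Q_3^\dagger$.

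Your fallback does not close the gap. Commutation of the lifts $K_3,K_4$ on $\Lambda^rW_0$ with $r\ge5$ is not controlled by Schur's lemma on $\Lambda^3W_0$ and $\Lambda^4W_0$. The proposed check also fails. $\Lambda^4W_0\cong2\cdot\mathbf 8\oplus2\cdot\mathbf{27}$, and $G_4$ has rank exactly $8$ because $\mathcal A_4$ maps onto $W_0^*$, so $G_4$ is not scalar on the $\mathbf 8$-isotypic block. Likewise, $\widetilde Q_3\widetilde Q_3^\dagger$ is $8$ times the singlet projector on $\Lambda^3W_0$, yet its three-body lift has rank $8$ on $\Lambda^4W_0$. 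Casimir-type behaviour on the small sectors does not survive lifting.
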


A detailed proof is given in Appendix~\ref{app:hodge-duality}.
Indeed, at \(N=3\) the traceless space has dimension eight.  Its Hodge
star maps forms of degree three to forms of degree five.  The invariant
forms obey \(\Omega_5=c\,{*}\Omega_3\) for some nonzero scalar \(c\).
Their squared norms in the present convention obey
\(\|\Omega_5\|^2=3\|\Omega_3\|^2\)
~\cite{deAzcarragaMacfarlane2000,
deAzcarragaMacfarlaneCocycles}.
Equation~\eqref{eq:su3-kernel-lift} in
Appendix~\ref{app:hodge-duality} shows that each
quintic contraction matrix is
\(3\widetilde Q_3^\dagger\widetilde Q_3\) on \(\Lambda^kW_0\).
Normal ordering the cubic anticommutator and using
\eqref{eq:cubic-cancellation} gives
\(\widetilde Q_3^\dagger\widetilde Q_3
=8-\hat C_2-\widetilde Q_3\widetilde Q_3^\dagger\).
This converts \(\widetilde Q_3^\dagger\widetilde Q_3\) into the
\(\widetilde Q_3\widetilde Q_3^\dagger\) form used in the explicit
formula there.  That formula expresses every \(K_k\)
through the commuting operators \(\hat n\), \(\hat C_2\), and
\(\widetilde Q_3\widetilde Q_3^\dagger\).
Thus the \((5,3)\) commutativity follows from this Hodge duality at
\(N=3\).

The same pairwise commutativity does not hold in the computed \(N=4\)
examples.  In the numerical \((5,4)\), \(R=4\) example described in
Section~\ref{sec:casimir-example}, the two 20-dimensional blocks have
the same \(\hat n\) and the same central Casimirs.  In \(\widetilde H\),
their splitting is therefore produced entirely by \(-K_3+K_4\),
equivalently by \(25(-K_3+K_4)\) in \(H\).  The two
operators commute in that sector, but a direct calculation with integer
matrices gives \([K_3,K_4]\ne0\) at \(R=5\).  At \((7,4)\), the same
calculation gives \([K_3,K_4]\ne0\), \([K_3,K_5]\ne0\), and
\([K_4,K_5]\ne0\) in sector \(R=5\).  Here \(R=5\) denotes the full
sector \(V_5\), including both possible occupations of the trace mode.

The ancillary program \texttt{verify\_normal\_ordering.py} constructs
the \(K_k\), checks the operator identities used in Proposition
\ref{prop:su3-commutativity}, and reproduces these commutators using
integer arithmetic.  The nonzero commutators show that the displayed
\(K_k\) cannot be simultaneously diagonalized in the \((5,4)\) and
\((7,4)\) examples.  They do not establish chaos or a failure of
integrability, and they do not exclude another commuting family.

\section{Numerical method and results}
\label{sec:method}
\label{sec:results}

This section explains how the sector maps and Hamiltonian matrices are
constructed.  It also states which conclusions are analytic and which
are inferred from numerical ranks.

\subsection{Computational method and status of the data}
\label{sec:gram}

We work in the ordered monomial basis of each \(V_R\), so the matrices \(Q_R\) are sparse.
Define
\begin{equation}\label{eq:stackedA}
A_R=\begin{bmatrix}Q_R\\[2pt] Q_{R-p}^{\dagger}\end{bmatrix} : V_R\to V_{R+p}\oplus V_{R-p}.
\end{equation}
Then
\begin{equation}\label{eq:gram}
H_R=A_R^{\dagger}A_R,
\end{equation}
so \(H_R\) and \(A_RA_R^{\dagger}\) have the same nonzero eigenvalues.
It is therefore more efficient to diagonalize \(A_RA_R^{\dagger}\) whenever
\(\dim V_{R+p}+\dim V_{R-p}<\dim V_R\).
For \(p=5\), \(N=4\), this is what makes the middle sectors feasible.

We imposed three checks on the data:
\begin{enumerate}[nosep]
\item the zero multiplicities in each sector from \eqref{eq:cohom} agree with those from \(H_R\),
\item the data obey the particle--hole symmetry \eqref{eq:phsym},
\item for the full spectra with \(N\le 4\), the energy multiplicities in each sector sum to \(\dim V_R\).
\end{enumerate}

\paragraph{Scope of the numerical results.}
The results below concern only the stated pairs \((p,N)\) and fall into
three categories.
\begin{enumerate}[label=(\roman*),nosep]
\item At the four complete cases \((5,3)\), \((5,4)\), \((5,5)\), \((7,4)\)
we report a numerical BPS multiplicity
\(h_R\) for every~\(R\).
\item The ranks in each sector for all four cases are supplied for every
\(R\) in the ancillary file \texttt{rank\_data.json}.
For \((5,4)\), the spectrum of \(H\) is tabulated through half filling in
Appendix~\ref{app:data54}.  The remaining sectors follow by
particle--hole conjugation.  For \((5,3)\) and \((7,4)\),
we record the distinct energies together with the BPS multiplicities,
with the complete spectra used only for the
consistency checks listed above.
At \((5,5)\) only the multiplicities at zero energy were computed, since the
nonzero levels would require diagonalizing the middle sectors.
\item At \((7,5)\) the data are partial.  Only the first sectors
\(R\le 6\) were computed, giving \(q_{\min}\) but not the full spectrum.
\end{enumerate}

The multiplicities are inferred from numerical ranks of explicitly
constructed integer sector maps.  For \(N\le4\), the zero and nonzero
clusters of singular values are separated by more than four orders of
magnitude.  In the \(N=5\) block computations, the corresponding
eigenvalue clusters of the Gram matrices are separated by more than two orders
of magnitude.  The \(N=5\) results were also reproduced by the second
implementation described in Appendix~\hyperref[app:data]{B}.

As an independent check on the calculation with floating-point numbers,
the ranks at \((5,3)\), \((5,4)\), and \((7,4)\) were recomputed from the
integer matrices of the reduced differential \(\widetilde Q_p\).  Exact
linear algebra was performed over the fields with characteristics
\(2^{31}-1\), \(2147483629\), and \(p\).
All three modular ranks agree with one another and with the numerical
rank in every sector.  This is an exact check over those fields, but
agreement over finitely many fields does not prove the rank over
\(\mathbf Q\).  Accordingly, the positive BPS multiplicities below
remain numerical unless stated otherwise.
For an integer matrix, reduction modulo a prime cannot increase its rank.
Thus a calculation with full column rank over one finite field proves
injectivity over \(\mathbf Q\).  Here \(\widetilde Q_p\) over a finite
field means the reduction of its integral sector matrices, not division
by \(p\) in that field.  If the cohomology in sector \(R\) vanishes over
one finite field, the incoming and outgoing modular ranks add to
\(\dim V_R\).  The corresponding ranks over \(\mathbf Q\) are at least as large,
while \(\widetilde Q_p^2=0\) prevents their sum from exceeding
\(\dim V_R\).  Equality follows, so the cohomology over \(\mathbf Q\)
also vanishes.  No such modular certification is presently claimed for
the \(N=5\) sectors.
The ancillary file \texttt{rank\_data.json} lists the reported ranks,
sector dimensions, nullities, and cohomology dimensions.  The bundle
also contains the program for the modular rank checks, the program that
checks the normal ordering, and the scripts that generate the figures.

We now present the numerical BPS generating functions for the quintic
model at \(N=3,4,5\) and the \(p=7\) model at \(N=4\).  We also give the
case \((7,3)\), where \(Q_7=0\), and partial data at \((7,5)\).
We then use projections to compare the cohomology at different values
of \(N\).

\subsection{The quintic model: \texorpdfstring{$p=5$}{p=5}}

The first nontrivial quintic case is \(N=3\), where the Fock space is
small enough for complete diagonalization.  At \(N=4\), Gram matrices
are used in the middle sectors.

\subsubsection*{\texorpdfstring{$N=3$}{N=3}}

The Hilbert space has dimension \(2^9=512\).
The numerical spectrum contains three distinct energies,
\[
E\in\{0,225,600\},
\]
and the BPS sector has dimension \(440\).
The multiplicities at zero energy, listed by fermion number, are
\begin{equation}\label{eq:h53}
(h_0,h_1,\ldots,h_9)=(0,0,20,75,125,125,75,20,0,0).
\end{equation}
In particular, the first nonzero BPS sector is \(R=2\): \(h_0=h_1=0\).

\subsubsection*{\texorpdfstring{$N=4$}{N=4}}

The Hilbert space has dimension \(2^{16}=65{,}536\).
The numerical spectrum contains nine distinct energies,
\[
E\in\{0,225,600,900,1100,1400,2100,2400,3600\},
\]
and the total BPS multiplicity is
\begin{equation}\label{eq:total44000}
\sum_R h_R=44{,}000.
\end{equation}
The multiplicities at zero energy are
\begin{equation}\label{eq:h54}
(h_0,h_1,\ldots,h_{16})=(0,0,0,0,125,1375,5000,9625,11750,9625,5000,1375,125,0,0,0,0).
\end{equation}
Equivalently, the BPS spectrum begins at fermion number \(R=4\).
The charge profiles for the four complete nontrivial cases are compared in Figure~\ref{fig:profiles}.

\paragraph{Equivalent representations at different energies.}
\label{sec:casimir-example}
In sector \(R=4\), consider the subspace with an unoccupied trace mode.
Writing \(\mathcal E_E\) for the eigenspace with energy \(E\) in this
subspace, the numerical \(\SU(4)\) decompositions are
\[
\mathcal E_0=[0,4,0]\oplus[0,2,0],
\qquad
\mathcal E_{2100}=[1,0,1]\oplus[0,2,0].
\]
The two eigenspaces have dimensions \(125\) and \(35\), respectively.
Each contains one copy of the same 20-dimensional irreducible
representation \([0,2,0]\).  Since these two summands are equivalent,
they have identical eigenvalues for every central
Casimir~\cite{FultonHarris1991}, but they occur at different energies.
No function of the central Casimirs can distinguish the two summands.
Section~\ref{sec:normalH} constructs a commuting family at \((5,3)\).
Whether the model more generally admits a family of pairwise
commuting \(\SU(N)\)-invariant conserved operators, beyond functions of
\(H\), that is defined for all \(N\ge3\) and resolves such
multiplicities remains open.

\begin{numericalresult}[Quintic BPS generating polynomials]
\label{prop:p5}
The numerical BPS generating polynomials are
\begin{align}
\Zbps^{(5,3)}(x)&=5x^2(1+x)^3\bigl(4+3x+4x^2\bigr),\label{eq:Z53-new}\\[3pt]
\Zbps^{(5,4)}(x)&=125x^4(1+x)^4\bigl(1+7x+6x^2+7x^3+x^4\bigr).\label{eq:Z54-new}
\end{align}
\end{numericalresult}

Evaluating at \(x=1\) gives \(440\) and \(44{,}000\), respectively.
After the displayed factors are removed, particle--hole symmetry gives
the reciprocal polynomials
\[
T_3^{(5)}(x)=4+3x+4x^2,
\qquad
T_4^{(5)}(x)=1+7x+6x^2+7x^3+x^4.
\]

\subsection{The \texorpdfstring{$p=7$}{p=7} model}

At \(N=3\), one has \(Q_7=0\) because \(7>2N-1\).  At
\(N=4\), it provides a nontrivial calculation at a different value of
\(p\).

\subsubsection*{\texorpdfstring{$N=3$}{N=3}}

Since \(p=7>2N-1=5\), the trace in the supercharge vanishes.
Accordingly \(Q_7\equiv 0\), the Hamiltonian vanishes identically, and the full Fock space is BPS:
\begin{equation}\label{eq:Z73-trivial}
\Zbps^{(7,3)}(x)=(1+x)^9.
\end{equation}
This agrees with the condition~\eqref{eq:range}.

\subsubsection*{\texorpdfstring{$N=4$}{N=4}}

At \(N=4\), the Hilbert space again has dimension \(65{,}536\).
The numerical spectrum contains eight distinct energies,
\[
E\in\{0,1764,4410,7350,7840,14700,18816,35280\},
\]
and the total BPS multiplicity is
\begin{equation}\label{eq:total59136}
\sum_R h_R=59{,}136.
\end{equation}
The multiplicities at zero energy, listed by fermion number, are
\begin{equation}\label{eq:h74}
(h_0,h_1,\ldots,h_{16})=(0,0,0,70,847,3395,7518,11319,12838,11319,7518,3395,847,70,0,0,0).
\end{equation}
Here the first nonzero BPS sector is \(R=3\).

\begin{numericalresult}[BPS generating polynomial for \(p=7\)]
\label{prop:p7}
For \(p=7\), \(N=4\), the numerical BPS generating polynomial is
\begin{equation}\label{eq:Z74-new}
\Zbps^{(7,4)}(x)=7x^3(1+x)^4\bigl(10+81x+101x^2+144x^3+101x^4+81x^5+10x^6\bigr).
\end{equation}
\end{numericalresult}

The polynomial
\[
T_4^{(7)}(x)=10+81x+101x^2+144x^3+101x^4+81x^5+10x^6
\]
is reciprocal, as follows from particle--hole symmetry.
The BPS fraction is
\(59{,}136/65{,}536\approx 90.2\%\),
which is higher than in the quintic model at the same matrix size.

Figure~\ref{fig:energy} compares the distinct energy levels in the two
computed \(N=4\) models.  The representation comparison in the quintic
case shows that the quadratic Casimir does not determine all these
levels.

\begin{figure}[H]
\centering
\includegraphics[width=0.55\textwidth]{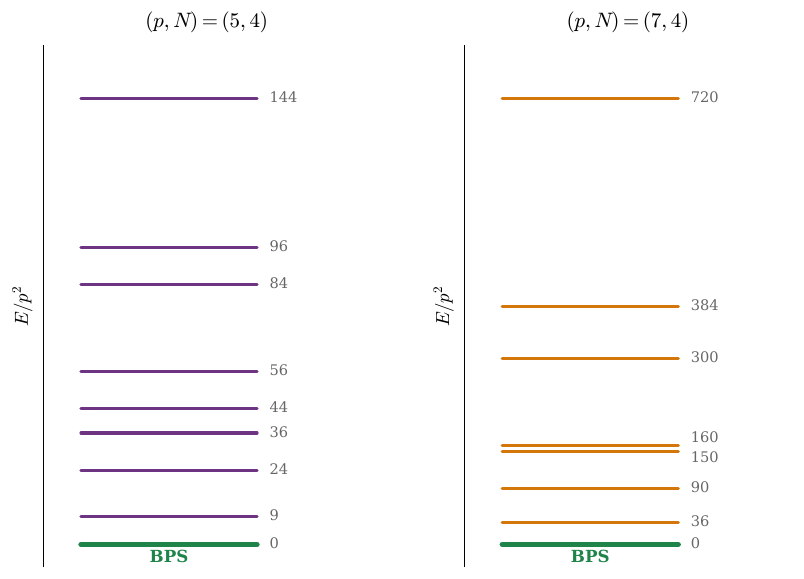}
\caption{Distinct numerical energy levels in the calculated models.
Each horizontal line marks a value of \(E/p^2\), and the green line is
the BPS value \(E=0\).  Equivalent irreducible representations can
occur at different energies, so the quadratic Casimir does not
determine these levels.}
\label{fig:energy}
\end{figure}

\subsection{Rank sequences and the \texorpdfstring{$N=5$}{N=5} data}
\label{sec:partial}

The cohomology formula~\eqref{eq:cohom} determines \(h_R\) from the ranks of the rectangular maps \(Q_R\).
For \(p=5\), \(N=4\), the reported rank sequence is
\begin{equation}\label{eq:rankseq}
(r_0,\ldots,r_{11})=(1,16,120,560,1695,2992,2992,1695,560,120,16,1),
\end{equation}
which is palindromic (as guaranteed by Theorem~\ref{thm:rankpal}) and unimodal.
The computation over finite fields certifies injectivity for \(R\le3\), so
there are no BPS states at those charges.  Within the reported
numerical rank assignment, the first rank deficiency occurs at \(R=4\),
where \(\binom{16}{4}-r_4=125\).  This reported kernel dimension equals the power
\(5^3\) that will appear in the factorization of the BPS polynomial in
Section~\ref{sec:factorization}.  The equality is not general: at
\((5,3)\), the first BPS multiplicity is \(h_2=20\), whereas the common
power dividing the reduced coefficients is \(5\).

At \(N=5\), the sector maps were computed using sparse matrix assembly
and block decomposition of the Gram matrices \(Q_R^\top Q_R\).
Computational details are given in Appendix~\hyperref[app:data]{B}.
The results are summarized in Table~\ref{tab:55data}.

\begin{table}[ht]
\centering
\small
\begin{tabular}{c r r r r}
\toprule
\(R\) & \(\dim V_R\) & \(\dim\ker Q_R\) & \(\mathrm{rank}\,Q_{R-5}\) & \(h_R\) \\
\midrule
0--4 & \(\le 12{,}650\) & 0 & --- & 0 \\
5 & 53{,}130 & 1 & 1 & 0 \\
6 & 177{,}100 & 25 & 25 & 0 \\
7 & 480{,}700 & 300 & 300 & 0 \\
8 & 1{,}081{,}575 & 36{,}050 & 2{,}300 & 33{,}750 \\
9 & 2{,}042{,}975 & 319{,}525 & 12{,}650 & 306{,}875 \\
10 & 3{,}268{,}760 & 1{,}221{,}254 & 53{,}129 & 1{,}168{,}125 \\
\bottomrule
\end{tabular}
\caption{Numerical BPS data at low fermion number for \((p,N)=(5,5)\).
For \(R\le 4\), the reported rank is full.  The sectors \(R=11,12\) follow
from the symmetry \(r_R=r_{N^2-p-R}\), and the remaining sectors follow from
particle--hole symmetry.}
\label{tab:55data}
\end{table}

\noindent Thus \(q_{\min}=8\) at \((p,N)=(5,5)\).
This is one above the empirical \(N\le4\) extrapolation
stated in Section~\ref{sec:factorization}.  Figure~\ref{fig:onset}
shows how the cohomology first becomes nonzero.
The first two nonzero BPS multiplicities are
\begin{equation}\label{eq:h8_55}
h_8^{(5,5)}=33{,}750=2\cdot 3^3\cdot 5^4,
\qquad
h_9^{(5,5)}=306{,}875=5^4\cdot 491.
\end{equation}
Both reported values are divisible by \(5^4\) but not \(5^5\), and their greatest common
divisor is exactly~\(5^4=625\).
By particle--hole symmetry, \(h_{17}^{(5,5)}=33{,}750\) and \(h_{16}^{(5,5)}=306{,}875\).

For the first two surviving sectors (\(R=8,9\)), the individual nullities and incoming ranks are each divisible by \(5^2\) but not~\(5^4\).
For instance,
\[
\dim\ker Q_8=36{,}050=5^2\cdot 1{,}442,
\qquad
\mathrm{rank}\,Q_3=2{,}300=5^2\cdot 92,
\]
but \(h_8=5^4\cdot 54\) because \(1{,}442-92=1{,}350=5^2\cdot 54\).
At \(R=10\), neither individual term is divisible by~\(5^2\), whereas their difference is divisible by~\(5^4\): \(h_{10}=1{,}168{,}125=5^4\cdot 1{,}869\).
Thus the extra \(5^4\) divisibility appears only after subtracting the incoming rank from the kernel dimension:
\(\dim\ker Q_R\equiv\mathrm{rank}\,Q_{R-5}\pmod{5^4}\).
It is not divisibility of either term separately by a higher power of \(p\).

The numerical calculation at \(R=10\) gives
\(r_{10}=2{,}047{,}506\) and
\(h_{10}=1{,}168{,}125\).
Together with the symmetry of the ranks and particle--hole symmetry, these
rank assignments give the following numerical generating polynomial:
\begin{equation}\label{eq:Z55-predicted}
\begin{split}
\Zbps^{(5,5)}(x)={}&33{,}750x^8+306{,}875x^9
+1{,}168{,}125x^{10}+2{,}556{,}875x^{11}\\
&+3{,}674{,}375x^{12}+3{,}674{,}375x^{13}
+2{,}556{,}875x^{14}\\
&+1{,}168{,}125x^{15}+306{,}875x^{16}
+33{,}750x^{17}.
\end{split}
\end{equation}
The corresponding numerical total is \(15{,}480{,}000\).
Section~\ref{sec:factorization} shows that the same polynomial contains
\((1+x)^5\) and that \(r_{10}\) can instead be determined from the
ranks at lower fermion number and the exact relation
\((1+x)\mid\mathcal R_{5,5}(x)\).

\begin{figure}[p]
\centering
\includegraphics[width=0.98\textwidth]{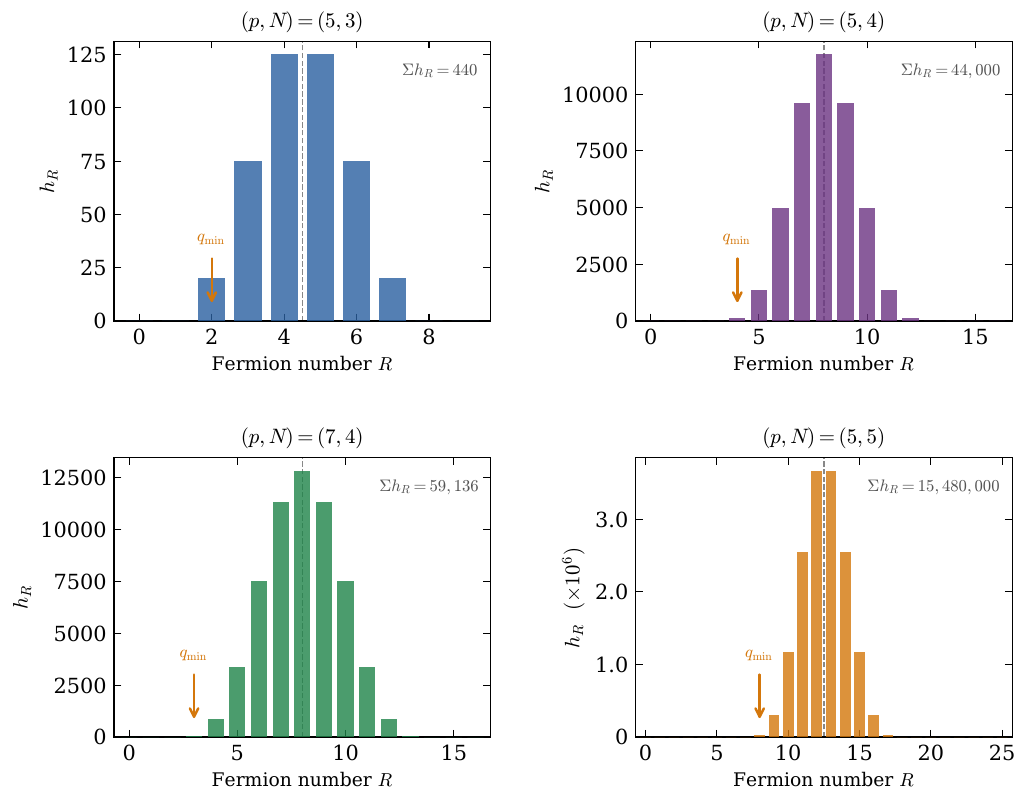}
\caption{Computed BPS charge profiles.
Numerical BPS multiplicities \(h_R\), resolved by fermion number, for
the four nontrivial cases with data in every sector.
The dashed line marks the particle--hole symmetry axis at half filling (\(R=N^2/2\)).
Arrows indicate the minimum BPS charge \(q_{\min}\).  At \((5,5)\) it is
one above the empirical \(N\le4\) extrapolation in
\eqref{eq:smallN-patterns}.}
\label{fig:profiles}
\end{figure}

\begin{figure}[H]
\centering
\includegraphics[width=0.66\textwidth]{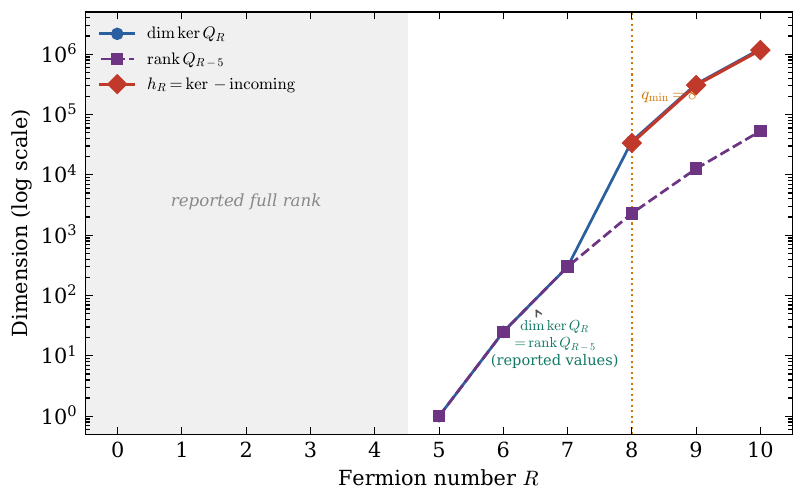}
\caption{For the reported numerical ranks, the figure shows the onset
of BPS cohomology at \((p,N)=(5,5)\).
For \(R\le 4\), \(Q_R\) is injective and no kernel exists.
At \(R=5,6,7\), a kernel appears in blue, but its dimension equals the
incoming rank in purple, so \(h_R=0\).
Cohomology first survives at \(R=8=q_{\min}\), where the kernel
dimension exceeds the incoming rank.
The red diamonds plot
\(h_R=\dim\ker Q_R-\operatorname{rank}Q_{R-5}\).}
\label{fig:onset}
\end{figure}

For \(p=7\), \(N=5\), the same numerical method gives full column rank for \(Q_R\) at \(R=0,\ldots,5\) and
\begin{equation}\label{eq:h6_75}
h_6^{(7,5)}=2{,}352=2^4\cdot 3\cdot 7^2
\end{equation}
(with no incoming map since \(R-p<0\)), so \(q_{\min}=6\).
This is again one above the empirical extrapolation from \(N\le4\)
stated in Section~\ref{sec:factorization}.
The nonzero entries of the sector matrices are \(\pm 7,\pm 14,\pm 42\),
consistent with the factor of seven proved in
Proposition~\ref{prop:cyclic}.

\subsection{Projections between ranks and fortuity}

To determine which computed BPS classes can come from larger matrices,
we use projection rather than inclusion.
In the nontrivial comparisons considered here, the natural inclusion
that retains the first \(N\) rows and columns,
\(\mathcal H_N\hookrightarrow \mathcal H_{N+1}\), is generally not a
cochain map.
Even on the vacuum, \(Q_p^{(N+1)}\) creates terms involving the new index that are absent from the image of \(Q_p^{(N)}\).
The useful cochain map therefore goes in the opposite direction.

Write
\[
W_N:=\mathrm{span}\{\Psi_{ij}\}_{1\le i,j\le N},
\qquad
\mathcal H_N=\Lambda^\bullet W_N,
\qquad
\mathcal B_R^{(p,N)}:=\ker Q_R/\operatorname{im}Q_{R-p}.
\]
For \(M>N\), let
\[
\pi_{M\to N}:\mathcal H_M\to \mathcal H_N
\]
be the linear map that deletes every basis monomial containing any generator \(\Psi_{ij}\) with \(i>N\) or \(j>N\).
Equivalently, it is the projection onto the subspace generated by monomials using only the first \(N\) rows and columns.
It preserves fermion number.

\begin{proposition}[Projection is a cochain map]
\label{thm:rankproj}
For every odd $p$ and every $M>N$, the projection
$\pi_{M\to N}:\mathcal H_M\to\mathcal H_N$
that sends each $\Psi_{ij}$ to itself if $i,j\le N$ and to zero otherwise,
extended multiplicatively to the exterior algebra, satisfies
\[
\pi_{M\to N}Q_p^{(M)}=Q_p^{(N)}\pi_{M\to N}.
\]
Hence $\pi_{M\to N}$ induces well defined linear maps on cohomology
\[
\Pi_{M\to N}^{(R)}:\mathcal B_R^{(p,M)}\to \mathcal B_R^{(p,N)},
\]
and for $L>M>N$ these maps satisfy
\(\Pi_{L\to N}^{(R)}=
\Pi_{M\to N}^{(R)}\circ\Pi_{L\to M}^{(R)}\).
The \emph{monotone BPS subspace} at rank~$N$ is
\[
\mathcal B_{R,\mathrm{mon}}^{(p,N)}:=\bigcap_{M>N}\operatorname{Im}\,\Pi_{M\to N}^{(R)}.
\]
A BPS class is \emph{fortuitous} in the present family if its coset in
$\mathcal B_R^{(p,N)}/\mathcal B_{R,\mathrm{mon}}^{(p,N)}$ is nonzero.
Equivalently, it lies outside $\mathcal B_{R,\mathrm{mon}}^{(p,N)}$.
Hence, for some $M_0>N$, it is not in
$\operatorname{Im}\,\Pi_{M_0\to N}^{(R)}$.  This is the definition used
below.  Its relation to the covering construction of
Ref.~\cite{Chang2024} is explained in Appendix~\ref{app:projections}.
\end{proposition}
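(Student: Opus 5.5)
The plan is to regard \(\pi_{M\to N}\) as the algebra homomorphism \(\Lambda^\bullet W_M\to\Lambda^\bullet W_N\) induced by the linear map \(\rho:W_M\to W_N\) that keeps \(\Psi_{ij}\) for \(i,j\le N\) and kills every other generator. Since \(\rho\) is a linear map of odd generators, its multiplicative extension is a well defined homomorphism of exterior algebras. On basis monomials it agrees with the deletion description: a monomial goes to itself if all its factors lie in the first \(N\) rows and columns, and to zero otherwise. It preserves fermion number.

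The key step is \(\pi_{M\to N}(\tr(\Psi_M^p))=\tr(\Psi_N^p)\). Because \(\pi\) is a homomorphism, it acts on each term of \(\sum_{i_0,\ldots,i_{p-1}}\Psi_{i_0i_1}\cdots\Psi_{i_{p-1}i_0}\) factorwise. A term survives only if every index \(i_k\le N\), and the surviving terms are exactly the terms of \(\tr(\Psi_N^p)\). Then, for any \(\omega\in\mathcal H_M\),
\[
\pi(Q_p^{(M)}\omega)=\pi(\tr\Psi_M^p)\,\pi(\omega)=Q_p^{(N)}\pi(\omega),
\]
since \(Q_p\) acts by left exterior multiplication. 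This proves the intertwining relation. The only care needed is to use the multiplicative definition rather than the deletion description directly. The deletion description is not obviously multiplicative, but the two agree on the monomial basis, so nothing beyond Proposition~\ref{prop:nilpotent}-level bookkeeping is required.

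The rest is standard homological algebra. A cochain map sends \(\ker Q_R^{(M)}\) into \(\ker Q_R^{(N)}\) and \(\operatorname{im}Q_{R-p}^{(M)}\) into \(\operatorname{im}Q_{R-p}^{(N)}\), so it induces \(\Pi^{(R)}_{M\to N}\) on \(\mathcal B_R\). For \(L>M>N\), the underlying maps satisfy \(\rho_{M\to N}\circ\rho_{L\to M}=\rho_{L\to N}\), since both kill exactly the generators with an index exceeding \(N\). Hence \(\pi_{L\to N}=\pi_{M\to N}\circ\pi_{L\to M}\) as algebra homomorphisms, and functoriality of cohomology gives the composition law.

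The monotone subspace is then an intersection of subspaces of the finite dimensional space \(\mathcal B_R^{(p,N)}\), so it is a subspace and the quotient is defined. The composition law also makes the images nested, \(\operatorname{Im}\Pi_{L\to N}\subseteq\operatorname{Im}\Pi_{M\to N}\). Consequently, a class outside the intersection is missing from some particular \(\operatorname{Im}\Pi_{M_0\to N}\), which gives the stated equivalence. I expect no real obstacle. The one point to state explicitly is that \(\pi\) must be multiplicative, which is what lets it commute with \(Q_p\), in contrast to the inclusion discussed above, which is not a cochain map.
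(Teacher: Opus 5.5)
Your proposal is correct and follows essentially the same route as the paper's proof: treat $\pi_{M\to N}$ as an algebra homomorphism, check termwise that it sends the degree-$p$ trace element at rank $M$ to the one at rank $N$, use multiplicativity to get the intertwining relation, and then apply standard homological algebra to obtain the induced maps, transitivity, and nested images. The paper additionally notes that the descending chain of images stabilizes by finite dimensionality. Your argument does not need this for the stated equivalence, since it follows directly from the definition of the intersection.
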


Because the maps compose, the images being intersected can only decrease
as \(M\) increases.  Their intersection is the part of the cohomology at
\(N\) that lies in the image from every larger rank.

The proof is deferred to Appendix~\ref{app:projections}.

\begin{remark}
In the D1--D5 system~\cite{ChangLinZhang2025}, the analogous projection does
not commute with the supercharge.  A preimage of a closed class can then
fail to be closed or can become exact at the larger rank.  That
complication does not arise here.
\end{remark}

With respect to the standard embedding \(\SU(N)\subset \SU(M)\), the
map \(\pi_{M\to N}\) intertwines the \(\SU(N)\) actions.  Once explicit
\(\SU(N)\) decompositions are available, the comparison can be made
separately for each irreducible representation.

We denote by
$\mathcal B^{(p,N)}_{R,\mathrm{fort}}:=
\mathcal B_R^{(p,N)}/\mathcal B^{(p,N)}_{R,\mathrm{mon}}$
the quotient that measures the independent BPS directions not
contained in the monotone subspace.

The reported numerical ranks at \(N=5\) (Table~\ref{tab:55data}) give
\(h_R^{(5,5)}=0\) for \(R=0,\ldots,7\).
The induced map \(\Pi_{5\to4}^{(R)}\) maps the \(N=5\) BPS cohomology
in sector \(R\) to the \(N=4\) BPS cohomology in the same sector, and
the induced maps compose.
If the sector at larger \(N\) is empty, its image is zero.
Conditional on that numerical vanishing, the \(N=4\) BPS classes in that
sector are therefore fortuitous under the present definition.
Figure~\ref{fig:tower} illustrates this criterion.

\begin{figure}[H]
\centering
\includegraphics[width=0.85\textwidth]{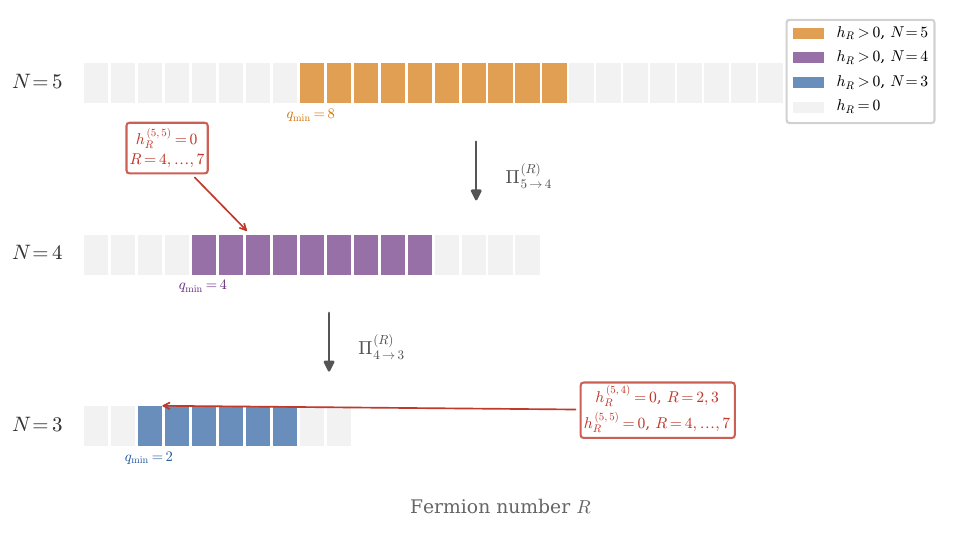}
\caption{Projection of BPS cohomology from larger to smaller \(N\) in
the quintic model.  Each row shows the sectors at fixed fermion number
\(R\).  Coloured cells have \(h_R>0\), and the maps
\(\Pi_{5\to4}^{(R)}\) and \(\Pi_{4\to3}^{(R)}\) preserve \(R\).  If the BPS space
at a larger value of \(N\) vanishes, its image at a smaller value of
\(N\) is zero.  Conditional on the displayed numerical vanishings,
every nonzero BPS class at \((5,3)\), and every nonzero BPS class with
\(R=4,\ldots,7\) at \((5,4)\), is fortuitous.}
\label{fig:tower}
\end{figure}

\begin{corollary}[Vanishing criterion for fortuity]
\label{prop:firstfort}
If $\mathcal B_R^{(p,M_0)}=0$ for some $M_0>N$, then
$\mathcal B_{R,\mathrm{mon}}^{(p,N)}=0$.
Hence every nonzero BPS class in $\mathcal B_R^{(p,N)}$ is fortuitous.
\end{corollary}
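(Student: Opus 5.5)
The corollary follows directly from Proposition~\ref{thm:rankproj}. I would first record that the induced map \(\Pi^{(R)}_{M_0\to N}:\mathcal B_R^{(p,M_0)}\to\mathcal B_R^{(p,N)}\) is linear and well defined, since \(\pi_{M_0\to N}\) is a cochain map preserving fermion number. If its source is the zero space, its image is \(\{0\}\). By definition,
\[
\mathcal B_{R,\mathrm{mon}}^{(p,N)}=\bigcap_{M>N}\operatorname{Im}\Pi^{(R)}_{M\to N}\subseteq\operatorname{Im}\Pi^{(R)}_{M_0\to N}=0,
\]
so the monotone subspace vanishes.

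For the second assertion, \(\mathcal B_{R,\mathrm{fort}}^{(p,N)}=\mathcal B_R^{(p,N)}/0\cong\mathcal B_R^{(p,N)}\). Every nonzero class therefore has a nonzero coset and is fortuitous in the sense of Proposition~\ref{thm:rankproj}. Equivalently, \(M_0\) itself witnesses that no nonzero class lies in \(\operatorname{Im}\Pi^{(R)}_{M_0\to N}\).

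The argument has no real obstacle. The only point to check is that the intersection defining the monotone subspace runs over all \(M>N\), so a single vanishing rank \(M_0\) suffices. There is no need to control ranks between \(N\) and \(M_0\). The composition law \(\Pi_{L\to N}=\Pi_{M\to N}\circ\Pi_{L\to M}\) also shows that the images for every \(L\ge M_0\) are zero, though the proof does not need this. I would add a remark that in the applications the hypothesis \(\mathcal B_R^{(p,M_0)}=0\) rests on numerical ranks, for example Table~\ref{tab:55data}, so the resulting fortuity statements are conditional. The corollary itself is an exact implication.
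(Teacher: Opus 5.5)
Your proof is correct and follows the paper's argument: the vanishing of $\mathcal B_R^{(p,M_0)}$ forces $\Pi^{(R)}_{M_0\to N}=0$, so the intersection defining $\mathcal B_{R,\mathrm{mon}}^{(p,N)}$ is zero. The paper also uses the composition law to show $\Pi^{(R)}_{L\to N}=0$ for every $L>M_0$, but, as you note, this step is not needed because $M_0$ itself already appears in the intersection.
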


\begin{proof}
If $\mathcal B_R^{(p,M_0)}=0$, then
$\Pi_{M_0\to N}^{(R)}=0$.  By transitivity, every map from a larger rank
$L>M_0$ factors through that empty sector:
\[
\Pi_{L\to N}^{(R)}
=\Pi_{M_0\to N}^{(R)}\circ\Pi_{L\to M_0}^{(R)}=0.
\]
Hence
$\mathcal B_{R,\mathrm{mon}}^{(p,N)}
=\bigcap_{L>N}\operatorname{Im}\Pi_{L\to N}^{(R)}=0$.
\end{proof}

For the first row of Table~\ref{tab:fortuity}, the required vanishings
occur at \(N=4\) for \(R=2,3\) and at \(N=5\) for \(R=4,\ldots,7\).
The second row uses the \(N=5\) quintic vanishings for
\(R=4,\ldots,7\).  The third uses the computed \(p=7\) vanishings at
\(N=5\) for \(R\le5\).  Their numerical status is described in
Section~\ref{sec:gram}.

\begin{table}[ht]
\centering
\small
\begin{tabular}{ccrrrl}
\toprule
\(p\) & \(N\) & identified & total BPS & fraction & sectors \\
\midrule
5 & 3 & 440 & 440 & 100\% & all (\(R=2,\ldots,7\)) \\
5 & 4 & 16{,}125 & 44{,}000 & 37\% & \(R=4,\ldots,7\) \\
7 & 4 & 4{,}312 & 59{,}136 & 7\% & \(R=3,4,5\)\\
\bottomrule
\end{tabular}
\caption{BPS states identified as fortuitous by projection at fixed
fermion number, conditional on the reported numerical vanishings at
the larger rank.  Particle--hole conjugation gives equal counts in a
separate comparison at fixed hole number.  Those counts are not added
here.}
\label{tab:fortuity}
\end{table}

At fixed hole number \(s\), particle--hole conjugation compares the
sectors \(R_M=M^2-s\) and \(R_N=N^2-s\) at ranks \(M\) and \(N\).
This differs from the projection at fixed fermion number, which
compares the same value of \(R\) at both ranks.  A conclusion from the
fixed hole comparison therefore does not determine the stable image at
fixed fermion number.

\paragraph{Index bound and the numerical difference at \texorpdfstring{$(p,N)=(5,4)$}{(p,N)=(5,4)}.}

For \(p=5\), \(N=4\), the Euler characteristics in the five residue
classes are obtained from \eqref{eq:index}:
\begin{equation}\label{eq:I5N4}
(I_0,I_1,I_2,I_3,I_4)=(3625,\,3625,\,-9500,\,11750,\,-9500).
\end{equation}
Therefore the exact lower bound from the indices is
\begin{equation}\label{eq:indexfloorN4}
\sum_{a=0}^{4}|I_a|=38{,}000.
\end{equation}
Comparing with the computed BPS count \eqref{eq:total44000} gives the
difference
\begin{equation}\label{eq:fort6000}
44{,}000-38{,}000=6{,}000.
\end{equation}

The difference is not distributed uniformly across residue classes
(Figure~\ref{fig:fortuity}(b)).
Writing \(Z_a=\sum_{R\equiv a \,(\mathrm{mod}\,5)}h_R\), one finds
\begin{center}
\renewcommand{\arraystretch}{1.15}
\begin{tabular}{cccc}
\toprule
residue \(a\) & \(|I_a|\) & \(Z_a\) & difference \\
\midrule
0 & 3625 & 6375 & 2750 \\
1 & 3625 & 6375 & 2750 \\
2 & 9500 & 9750 & 250 \\
3 & 11750 & 11750 & 0 \\
4 & 9500 & 9750 & 250 \\
\bottomrule
\end{tabular}
\end{center}
In residue~3 the BPS count equals the index bound, while residues~0 and~1
give the largest differences.
This decomposition compares, residue class by residue class, the
computed BPS count with the lower bound supplied by the protected
index.  It does not select individual BPS states.

\begin{figure}[H]
\centering
\includegraphics[width=0.75\textwidth]{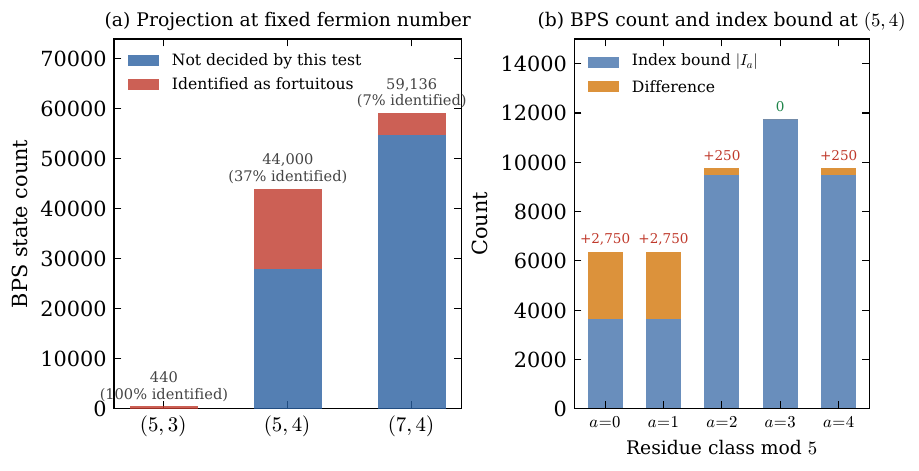}
\caption{Projection test and index comparison.
Left panel: red denotes states conditionally identified as fortuitous by
the projection test, using the reported numerical vanishings at larger
\(N\).  Blue denotes states whose status is not determined by this test.
Percentages refer only to this comparison.
Right panel: decomposition by residue class at \((5,4)\), showing the
lower bound \(|I_a|\) (blue) and the difference from that bound (orange)
in each class \(a\bmod 5\).  The BPS count in residue \(a=3\) equals
the bound.}
\label{fig:fortuity}
\end{figure}

\section{Factorization and map ranks}
\label{sec:factorization}

The BPS generating functions in Section~\ref{sec:results} display a
common factorization pattern.
We now state that pattern precisely and relate it to the ranks of the maps \(Q_R\).
Table~\ref{tab:master} collects the main numerical results for the four
nontrivial cases with data in every sector.

\begin{table}[ht]
\centering
\renewcommand{\arraystretch}{1.3}
\footnotesize
\begin{tabular}{cc rrr c ccc c}
\toprule
$(p,N)$ & $2^{N^2}$ & $\sum h_R$ & index bound & difference
  & $q_{\min}$ & $p^{b_{p,N}}$ & $\deg T$ & $T(1)$
  & $\#\{E/p^2\}$ \\
\midrule
$(5,3)$ & $512$ & $440$ & $400$ & $40$
  & $2$ & $5^1$ & $2$ & $11$
  & $3$ \\
$(5,4)$ & $65{,}536$ & $44{,}000$ & $38{,}000$ & $6{,}000$
  & $4$ & $5^3$ & $4$ & $22$
  & $9$ \\
$(7,4)$ & $65{,}536$ & $59{,}136$ & $55{,}468$ & $3{,}668$
  & $3$ & $7^1$ & $6$ & $528$
  & $8$ \\
$(5,5)$ & $33{,}554{,}432$ & $15{,}480{,}000$ & $11{,}781{,}250$ & $3{,}698{,}750$
  & $8$ & $5^4$ & $4$ & $774$
  & --- \\
\bottomrule
\end{tabular}
\caption{Summary of the numerical results at fixed \(N\).
For each computed case, the columns give the dimension of the Hilbert
space, the total BPS count, the lower bound from the Witten indices and
the difference between them, the minimum BPS charge $q_{\min}$, the power
$p^{b_{p,N}}$, the degree and evaluation $T(1)$
of the final quotient, and the number of distinct rescaled energy
levels $E/p^2$.}
\label{tab:master}
\end{table}

\noindent Whenever \((1+x)^N\) divides \(\Zbps^{(p,N)}(x)\), set
\begin{equation}\label{eq:reducedT}
P_N^{(p)}(x):=x^{-q_{\min}}(1+x)^{-N}\Zbps^{(p,N)}(x).
\end{equation}
Here \(q_{\min}\) is the minimum BPS charge.  If \(P_N^{(p)}\) has
integer coefficients, let \(b_{p,N}\) be the largest integer for which
\(p^{b_{p,N}}\) divides all of them, and define
\[
T_N^{(p)}(x):=p^{-b_{p,N}}P_N^{(p)}(x).
\]

Corollary~\ref{cor:structural-factors} accounts for two factors of
\(1+x\) when \(N\) is even and three when \(N\) is odd.  Thus the
observed factor \((1+x)^N\) contains \(N-2\) further factors when \(N\)
is even and \(N-3\) when \(N\) is odd.  In particular, the factor
\((1+x)^3\) at \((p,N)=(5,3)\) is entirely structural.  The additional
divisibility first appears in the present data at \(N=4\).

Particle--hole conjugation gives
\(\Zbps^{(p,N)}(x)=x^{N^2}\Zbps^{(p,N)}(x^{-1})\).  Therefore, whenever
\(x^{q_{\min}}(1+x)^N\) divides it, the quotient is automatically
palindromic.

\begin{conjecture}[Additional factors]
\label{conj:factor}
For odd \(p\ge5\) with \(p\le2N-1\),
\begin{equation}\label{eq:factor-divisibility}
(1+x)^N\mid\Zbps^{(p,N)}(x)
\qquad\text{in }\Z[x].
\end{equation}
After the minimum charge and this factor are removed, the quotient has
nonnegative coefficients and a common positive factor of \(p\):
\begin{equation}\label{eq:factor-positive-p}
P_N^{(p)}(x)\in p\,\Z_{\ge0}[x].
\end{equation}
Equivalently, \(b_{p,N}\ge1\) and
\begin{equation}\label{eq:factorlaw}
\Zbps^{(p,N)}(x)=p^{b_{p,N}}\,x^{q_{\min}}(1+x)^N\, T_N^{(p)}(x),
\end{equation}
where \(T_N^{(p)}(x)\) has nonnegative integer coefficients and its
coefficients do not all have a further factor of \(p\).  Its
palindromicity follows from
particle--hole conjugation.
\end{conjecture}

For \((p,N)=(5,3),(5,4),(7,4)\), the computed parameters satisfy the
simple formulas
\begin{equation}\label{eq:smallN-patterns}
q_{\min}=\binom{N-m}{2}+m,\qquad
b_{p,N}=\binom{N-m}{2},\qquad m=\frac{p-3}{2}.
\end{equation}
These formulas describe the three numerical data sets with \(N\le4\).
They are not claims for arbitrary \(N\).
These patterns do not extend to \(N=5\): the computations give \(q_{\min}=8\) at \((5,5)\) and \(q_{\min}=6\) at \((7,5)\), each one unit above the \(N\le4\) value.
They also give \(b_{5,5}=4\), rather than the extrapolated value \(6\).
No explicit formula for $q_{\min}$ or $b_{p,N}$ valid for arbitrary~$N$ is currently known.
The numerical polynomial at \((5,5)\) still has the form
\eqref{eq:factorlaw}.
The ten reported ranks below the middle and rank symmetry leave only
the middle rank undetermined.  The exact relation
\((1+x)\mid\mathcal R_{5,5}(x)\) fixes that rank, and the direct
numerical calculation of \(h_{10}\) agrees with
\eqref{eq:Z55-predicted}.  Table~\ref{tab:newdata} compares the two
calculations at \(N=5\).

\begin{table}[ht]
\centering
\renewcommand{\arraystretch}{1.15}
\begin{tabular}{ccccccc}
\toprule
\(p\) & \(N\) & \(q_{\min}\) & \(p^{b_{p,N}}\) & \(\deg T\) & total BPS & sectors computed \\
\midrule
5 & 5 & 8 & \(5^4\) & 4 & 15{,}480{,}000 & all \\
7 & 5 & 6 & --- & --- & --- & \(R\le6\) \\
\bottomrule
\end{tabular}
\caption{Numerical results at \(N=5\).  Both values of \(q_{\min}\)
are one above the formulas inferred from \(N\le4\) in
\eqref{eq:smallN-patterns}.  The dashes in the \(p=7\) row require
computations beyond \(R=6\).}
\label{tab:newdata}
\end{table}

All four numerical polynomials have the factor \((1+x)^N\), a common
positive power of \(p\), and a final quotient with nonnegative
coefficients.  Palindromicity of that quotient follows from
particle--hole conjugation once the divisibility is known.
The sequence \(b_{5,N}=1,3,4\) for \(N=3,4,5\) has no known explicit formula.

After the factors proved in Corollary~\ref{cor:structural-factors} are
removed, the conjectural content is the remaining divisibility needed
to reach \((1+x)^N\), the positive common factor of \(p\), and
nonnegativity of the final quotient.

\begin{theorem}[Identity relating the BPS polynomial to the map ranks]
\label{thm:rankpoly}
Define the polynomial
\begin{equation}\label{eq:rankpoly}
\mathcal R_{p,N}(x):=\sum_{R\ge 0}r_R\,x^R,
\qquad r_R:=\operatorname{rank}Q_R=\operatorname{rank}\widetilde Q_{p,R}.
\end{equation}
Then
\begin{equation}\label{eq:Zbps-rank}
\Zbps^{(p,N)}(x)=(1+x)^{N^2}-(1+x^p)\,\mathcal R_{p,N}(x),
\end{equation}
and consequently, for every $1\le k\le N^2$,
\begin{equation}\label{eq:rank-equiv}
(1+x)^k \mid \Zbps^{(p,N)}(x)
\;\Longleftrightarrow\;
(1+x)^{k-1}\mid \mathcal R_{p,N}(x).
\end{equation}
In particular, $(1+x)^N\mid\Zbps^{(p,N)}(x)$ if and only if
$(1+x)^{N-1}\mid\mathcal R_{p,N}$.
Thus the factor \((1+x)^N\) is equivalent to a divisibility property of the ranks \(r_R\).
Any explanation of the factor \((1+x)^N\) should therefore also account
for the corresponding divisibility of \(\mathcal R_{p,N}(x)\).
\end{theorem}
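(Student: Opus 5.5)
The identity \eqref{eq:Zbps-rank} follows by multiplying the cohomology formula \eqref{eq:cohom} by \(x^R\) and summing over all \(R\). With \(\dim V_R=\binom{N^2}{R}\), the first term gives \(\sum_R\binom{N^2}{R}x^R=(1+x)^{N^2}\). The term \(-\sum_R r_Rx^R\) is \(-\mathcal R_{p,N}(x)\). The incoming ranks give \(-\sum_R r_{R-p}x^R=-x^p\mathcal R_{p,N}(x)\) after the shift \(R\mapsto R+p\). This shift uses the convention that \(Q_R=0\) when its source or target lies outside \(0\le R\le N^2\), so no boundary terms arise. Adding the three contributions gives \eqref{eq:Zbps-rank}. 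We also note that \(\operatorname{rank}Q_R=\operatorname{rank}\widetilde Q_{p,R}\) over \(\C\), since \(Q_p=p\widetilde Q_p\) by \eqref{eq:Qreduced} and \(p\neq0\).

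For the equivalence \eqref{eq:rank-equiv}, we work in \(\Z[x]\), or equivalently in \(\mathbf Q[x]\), since \(1+x\) is monic. The key fact is that \(1+x^p\) is divisible by \(1+x\) exactly once, because \(p\) is odd. Indeed, \(1+x^p=(1+x)\,\Phi(x)\) with \(\Phi(x)=\sum_{j=0}^{p-1}(-1)^jx^j\), and \(\Phi(-1)=p\neq0\). Since \(k\le N^2\), we also have \((1+x)^k\mid(1+x)^{N^2}\). Hence \((1+x)^k\mid\Zbps\) if and only if \((1+x)^k\mid(1+x)\Phi(x)\mathcal R_{p,N}(x)\), which holds if and only if \((1+x)^{k-1}\mid\Phi\,\mathcal R_{p,N}\). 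Because \(1+x\) is irreducible and coprime to \(\Phi\), this last condition is equivalent to \((1+x)^{k-1}\mid\mathcal R_{p,N}\). The case \(k=N\) is the stated special case, which requires \(N\le N^2\).

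The only point needing care is the reindexing at the ends of the range, which the convention \(V_R=0\) outside \([0,N^2]\) settles. Coprimality is also worth stating explicitly, either through the evaluation \(\Phi(-1)=p\) or through \(\gcd(1+x,\Phi)=1\) in \(\mathbf Q[x]\) together with Gauss's lemma for divisibility in \(\Z[x]\). No step here presents a real obstacle. The content is bookkeeping on \eqref{eq:cohom}, plus the fact that \(1+x\) divides \(1+x^p\) only to first order.
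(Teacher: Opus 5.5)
Your proposal is correct and takes essentially the same route as the paper. You sum \eqref{eq:cohom} against $x^R$ to obtain the identity, then use $1+x^p=(1+x)\Phi(x)$ with $\Phi(-1)=p\neq0$ to reduce $(1+x)^k\mid\Zbps^{(p,N)}$ to $(1+x)^{k-1}\mid\mathcal R_{p,N}$; your remarks on the boundary reindexing and on $\operatorname{rank}Q_R=\operatorname{rank}\widetilde Q_{p,R}$ only make explicit what the paper leaves implicit.
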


\begin{proof}
Equation~\eqref{eq:Zbps-rank} follows from~\eqref{eq:cohom} by summing
$h_R=\dim V_R-r_R-r_{R-p}$ against~$x^R$ and using
$\sum_R\binom{N^2}{R}x^R=(1+x)^{N^2}$.
For~\eqref{eq:rank-equiv}, since \(1\le k\le N^2\), the term
\((1+x)^{N^2}\) in~\eqref{eq:Zbps-rank} is divisible by
\((1+x)^k\).  Hence
\[
  (1+x)^k\mid\Zbps^{(p,N)}(x)
  \;\Longleftrightarrow\;
  (1+x)^k\mid(1+x^p)\mathcal R_{p,N}(x).
\]
For odd \(p\),
\[
  1+x^p=(1+x)(1-x+x^2-\cdots+x^{p-1}),
\]
and the second factor takes the nonzero value \(p\) at \(x=-1\).
It is therefore coprime to \(1+x\) over \(\mathbb Q[x]\), so
\[
  (1+x)^k\mid(1+x^p)\mathcal R_{p,N}(x)
  \;\iff\;
  (1+x)^{k-1}\mid\mathcal R_{p,N}(x).
\]
Because powers of \(1+x\) are monic, the same divisibility statements
hold over \(\mathbb Z[x]\).  This proves~\eqref{eq:rank-equiv}.
\end{proof}

The reported rank data exhibit this stronger divisibility.  The
polynomials \(\mathcal R_{p,N}\) factor as
\begin{align}
\mathcal R_{5,3}(x)&=(1+x)^2\bigl(1+7x+x^2\bigr),\label{eq:J53}\\[3pt]
\mathcal R_{5,4}(x)&=(1+x)^3\bigl(1+13x+78x^2+286x^3+590x^4+286x^5+78x^6+13x^7+x^8\bigr),\label{eq:J54}\\[3pt]
\mathcal R_{7,4}(x)&=(1+x)^3\bigl(1+13x+78x^2+216x^3+78x^4+13x^5+x^6\bigr),\label{eq:J74}
\end{align}
The parenthetical factor in each line has positive integer
coefficients.
Theorem~\ref{thm:rankpal} proves the palindromicity of \(\mathcal R_{p,N}\).
Since \(\mathcal R_{p,N}\) and \((1+x)^{N-1}\) are palindromic, the
quotient is automatically palindromic whenever the divisibility holds.
The exterior algebra of creation operators acts by maps of the complex.
Every creation operator \(c=\Psi_{ij}\) supercommutes with \(Q_p\):
\begin{equation}\label{eq:supercommute}
c\,Q_p=-Q_p\,c,
\end{equation}
since both are odd and built purely from creators.
Hence left multiplication by \(c\) sends \(\operatorname{im}Q_R\) into \(\operatorname{im}Q_{R+1}\) and \(\ker Q_R\) into \(\ker Q_{R+1}\), and therefore acts on cohomology.
Section~\ref{sec:discussion} considers whether the \(N-1\) traceless
diagonal creation modes can explain the remaining factors of \(1+x\).

\begin{proposition}[Determination of the middle rank at \((5,5)\)]
\label{prop:cascade}
Conditional on the ten reported numerical ranks \(r_0,\ldots,r_9\),
the proved rank symmetry and the exact relation
\((1+x)\mid\mathcal R_{5,5}(x)\) determine the middle rank~\(r_{10}\).
For the reconstructed numerical polynomial \(\mathcal R_{5,5}\), the
reported ranks satisfy the additional relation that gives
\((1+x)^4\mid\mathcal R_{5,5}\), while
\((1+x)^5\nmid\mathcal R_{5,5}\).
\end{proposition}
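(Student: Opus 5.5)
The plan is to treat \(r_{10}\) as the only unknown in \(\mathcal R_{5,5}\) and show that one linear condition determines it. At \((p,N)=(5,5)\) we have \(N^2-p=20\), so Theorem~\ref{thm:rankpal} gives \(r_R=r_{20-R}\) for all \(R\), and \(\mathcal R_{5,5}\) has degree \(20\). Once \(r_0,\ldots,r_9\) are taken from the reported data, all coefficients except the middle one are fixed: \(r_{11},\ldots,r_{20}\) are the reflections of \(r_9,\ldots,r_0\). The only free coefficient is \(r_{10}\), which multiplies the even power \(x^{10}\).

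Next I will impose \((1+x)\mid\mathcal R_{5,5}(x)\), which is exact by \eqref{eq:trace-rank-splitting} in Corollary~\ref{cor:structural-factors}. This condition says \(\mathcal R_{5,5}(-1)=0\), that is,
\[
\sum_R(-1)^Rr_R=0 .
\]
By symmetry the terms pair up as \(R\leftrightarrow 20-R\), which have the same parity. The condition therefore reads \(r_{10}=-2\sum_{R=0}^{9}(-1)^Rr_R\). This is linear in \(r_{10}\) with coefficient \(+1\), so it has a unique solution. That proves the first claim.

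For the second claim I will work from the reconstructed numerical polynomial. I will check the higher conditions \(\mathcal R_{5,5}^{(j)}(-1)=0\) for \(j=1,2,3\), and then show that the fourth derivative at \(-1\) is nonzero. Two of these conditions can be skipped. Since \(N=5\) is odd, Corollary~\ref{cor:structural-factors} already gives \((1+x)^2\mid\mathcal R_{5,5}\). In addition, for a reciprocal polynomial of even degree the multiplicity of the root at \(-1\) is even, so the \(j=3\) condition follows from the \(j=2\) condition. Hence the only genuinely new check is the single relation that lifts the multiplicity from \(2\) to \(4\), namely \(\mathcal R_{5,5}^{(2)}(-1)=0\) after dividing by \((1+x)^2\).

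The same parity argument settles the final non-divisibility. Multiplicity \(5\) is odd and is therefore impossible for a reciprocal polynomial of even degree. So \((1+x)^5\nmid\mathcal R_{5,5}\) follows once we know \(\mathcal R_{5,5}\neq 0\), which holds because \(r_0=1\). As a consistency check, I will verify that the \(r_{10}\) obtained this way agrees with the directly computed value \(r_{10}=2{,}047{,}506\) and with \(h_{10}\) through \eqref{eq:cohom}.

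The main obstacle is only arithmetic. It lies in evaluating the quotient \(\mathcal R_{5,5}/(1+x)^2\) at \(-1\) using the reported ranks. That step is where the conditional, numerical input enters.
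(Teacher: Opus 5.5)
Your determination of \(r_{10}\) matches the paper's argument. Rank symmetry leaves \(r_{10}\) as the only unknown, and \(\mathcal R_{5,5}(-1)=0\) gives \(r_{10}=-2\sum_{R=0}^{9}(-1)^Rr_R=2{,}047{,}506\). Your use of even multiplicity to pass from one factor to two, and from three to four, is also how the paper argues.

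The gap is in the final step. Parity says the multiplicity of the root \(-1\) is even, so it excludes multiplicity exactly \(5\). But \((1+x)^5\nmid\mathcal R_{5,5}\) means the multiplicity is at most \(4\). Parity only gives \((1+x)^5\mid\mathcal R_{5,5}\iff(1+x)^6\mid\mathcal R_{5,5}\). Knowing \(\mathcal R_{5,5}\ne0\) from \(r_0=1\) does not bound the multiplicity: \((1+x)^{20}\) is reciprocal of degree \(20\) with constant term \(1\), yet it is divisible by \((1+x)^5\). So the fourth-derivative check you listed at the outset cannot be skipped. You must show the multiplicity is exactly \(4\), i.e.\ that \(\mathcal R_{5,5}/(1+x)^4\) does not vanish at \(x=-1\). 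The paper does this and finds the value \(13{,}750=2\cdot5^4\cdot11\), equivalently \(\mathcal R^{(4)}_{5,5}(-1)=4!\cdot13{,}750=330{,}000\).

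One convenient way to organize the arithmetic is to set \(t=R-10\). Palindromicity kills the odd moments \(\sum_R(-1)^Rt^{j}r_R\). Given \(\mathcal R_{5,5}(-1)=0\), the multiplicity is at least \(4\) if and only if \(\sum_R(-1)^Rt^2r_R=0\), and at least \(6\) if and only if also \(\sum_R(-1)^Rt^4r_R=0\). With the reported ranks the first sum is \(0\) and the second is \(330{,}000\). This quartic evaluation, not the one at multiplicity two, decides the non-divisibility claim. It is also a second place where the conditional numerical input enters.
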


\begin{proof}
Let \(P(x)=x^{2d}P(1/x)\) be palindromic of even degree~\(2d\).
If \(P(-1)=0\), then \(P/(1+x)\) is palindromic of odd degree and
therefore also vanishes at \(-1\).  Thus a first root at \(-1\) has
multiplicity at least two.

At \((p,N)=(5,5)\), the polynomial \(\mathcal R_{5,5}\) has degree~\(20\).
Corollary~\ref{cor:structural-factors} gives the exact condition
\(\mathcal R_{5,5}(-1)=0\).  Rank symmetry expresses every coefficient
except \(r_{10}\) in terms of \(r_0,\ldots,r_9\).  The exact condition
is therefore one linear equation for \(r_{10}\), and it gives
\[
r_{10}=2{,}047{,}506.
\]
The reconstructed value agrees with the direct numerical calculation
at \(R=10\) reported in Section~\ref{sec:partial}.
Palindromicity then supplies the second factor of \(1+x\).

The quotient \(\mathcal R_{5,5}/(1+x)^2\) is palindromic of even
degree~18.  The reported ranks satisfy the additional condition that
this quotient vanishes at \(-1\), and palindromicity supplies the
fourth factor.  The next obstruction is
\[
\left.\frac{\mathcal R_{5,5}(x)}{(1+x)^4}\right|_{x=-1}
=13{,}750=2\cdot5^4\cdot11\ne0.
\]
\end{proof}

For the reconstructed numerical polynomial \(\mathcal R_{5,5}\),
\[
\mathcal R'_{5,5}(-1)=\mathcal R''_{5,5}(-1)
=\mathcal R'''_{5,5}(-1)=0,\qquad
\mathcal R^{(4)}_{5,5}(-1)=330{,}000\ne0.
\]
Theorem~\ref{thm:rankpoly} gives the factor \((1+x)^5\) in the
reconstructed numerical BPS polynomial~\eqref{eq:Z55-predicted}, which
can be written as
\begin{equation}\label{eq:Z55-factorized}
\Zbps^{(5,5)}(x)
=5^4x^8(1+x)^5
\bigl(54+221x+224x^2+221x^3+54x^4\bigr).
\end{equation}
Thus the reconstruction uses the ten numerical ranks below the middle,
the proved symmetry of the ranks, and the exact relation
\((1+x)\mid\mathcal R_{5,5}(x)\).  It does
not assume the further factors observed in the data.  The direct
calculation at \(R=10\) supplies an independent numerical check.

\begin{remark}[Relation to the cubic model]
\label{rem:cubic}
Reference~\cite{Chen2025} found the exact total generating function
\begin{equation}\label{eq:cubicbenchmark}
\Zbps^{(3,N)}(x)
=3^{\binom N2}x^{\binom N2}(1+x)^N.
\end{equation}
Thus the factor \((1+x)^N\) and a positive power of the supercharge
degree already occur for \(p=3\), with
\(b_{3,N}=q_{\min}=\binom N2\) and \(T=1\).  For \(p\ge5\), the
question is whether this structure persists when the Hamiltonian is no
longer determined in general by the quadratic Casimir.
\end{remark}

\section{Discussion and open questions}
\label{sec:discussion}

The numerical generating polynomials in
Eqs.~\eqref{eq:Z53-new}, \eqref{eq:Z54-new}, \eqref{eq:Z74-new}, and
\eqref{eq:Z55-predicted} are the main computational results.  All four
contain \((1+x)^N\).  The data at \(N=5\) also show that the formulas
for \(q_{\min}\) and \(b_{p,N}\) inferred from \(N\le4\) do not continue
unchanged.  The factorization conjecture remains consistent with the
four polynomials, but its parameters are not known in general.

Corollary~\ref{cor:structural-factors} proves divisibility by \((1+x)^2\)
for every \(N\ge2\), and by \((1+x)^3\) when \(N\) is odd.
Theorem~\ref{thm:rankpal} gives the rank symmetry for odd
\(p\le2N-1\).  Theorem~\ref{thm:rankpoly} expresses the remaining
divisibility as a condition on \(\mathcal R_{p,N}\), but these results do
not prove \((1+x)^N\) for arbitrary \(N\).

The Hamiltonian analysis addresses a different question.  In the
numerical \((5,4)\), \(R=4\) example, equivalent \(\SU(4)\) summands occur
at different energies.  The exact normal ordering identities show that
the combination \(-K_3+K_4\) in \(\widetilde H=H/25\) produces this
splitting.  This result concerns the energy spectrum, whereas
\((1+x)^N\) concerns multiplicities at zero energy.  The formulas for
the operators \(K_k\) and their commutators do not explain that factor.

The projections between matrix sizes provide a separate comparison of
the cohomology.  Conditional on the reported numerical vanishings at
larger \(N\), Table~\ref{tab:fortuity} lists the sectors and numbers of
BPS classes identified as fortuitous at fixed fermion number.  The
stable images in the remaining nonzero sectors are not yet determined.

\subsection{Comparison with supersymmetric SYK}

Here the support means the fermion numbers for which \(h_R\ne0\).  It
contains 9 of 17 sectors at \((p,N)=(5,4)\), 11 of 17 at \((7,4)\),
and 10 of 26 at \((5,5)\).

For comparison, consider \(\mathcal N=2\) SYK with the same degree
\(\hat q\) of the supercharge and the same number
\(N_f=N^2=16\) of complex fermions
\cite{FuGaiotto,ChangChenSiaYang2024}.  The examples obtained by exact
diagonalization in those references have BPS states in \(\hat q\)
sectors, and their total count equals the lower bound from the Witten
indices.  The corresponding values are shown below and in
Figure~\ref{fig:charge_support}.

\begin{center}
\renewcommand{\arraystretch}{1.1}
\begin{tabular}{lcccc}
\toprule
Model & \(Z_{\mathrm{BPS}}\) & Index bound & Difference & Sectors \\
\midrule
SYK \(\hat q=3\), \(N_f=16\) & 8{,}748 & 8{,}748 & 0 & 3 \\
SYK \(\hat q=5\), \(N_f=16\) & 38{,}000 & 38{,}000 & 0 & 5 \\
SYK \(\hat q=7\), \(N_f=16\) & 55{,}468 & 55{,}468 & 0 & 7 \\
Matrix \((5,4)\) & 44{,}000 & 38{,}000 & 6{,}000 & 9 \\
Matrix \((7,4)\) & 59{,}136 & 55{,}468 & 3{,}668 & 11 \\
\bottomrule
\end{tabular}
\end{center}

\noindent The comparison uses the lower bound from the Witten indices
in the \(\hat q\) residue classes.
For the two matrix models, the differences are \(6{,}000\) at
\((5,4)\) and \(3{,}668\) at \((7,4)\).
For \(\hat q=7\), the two central SYK multiplicities,
\(h_7=11{,}319\) and \(h_8=12{,}838\)~\cite{ChangChenSiaYang2024},
coincide with the numerical matrix model values at \((7,4)\).
Differences occur away from the center and in the additional outer sectors of the matrix model.

The distribution of supersymmetric ground states across charge sectors
has also been analysed in the double scaling
limit~\cite{BerkoozSYK2020}.  It shapes the phase structure at
nonzero background charge~\cite{HeydemanTuriaciZhao2023}.
Its relation to the Schwarzian description of black holes close to the
BPS bound is developed in
Refs.~\cite{ChangChenSiaYang2024,MuruganStanfordWitten2017,LinMaldacena2023}.

\begin{figure}[H]
\centering
\includegraphics[width=0.72\textwidth]{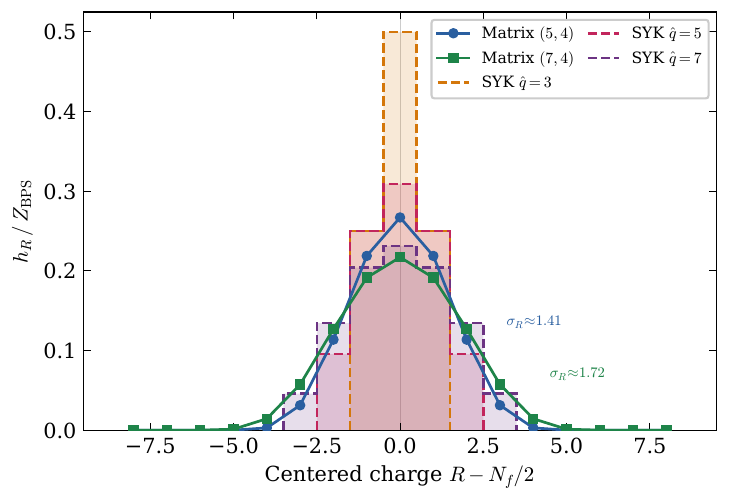}
\caption{Charge support: matrix model vs.\ \(\mathcal N=2\) SYK.
Normalized numerical BPS distributions \(h_R/Z_{\mathrm{BPS}}\) for the matrix
models \((5,4)\) and \((7,4)\) (solid lines, \(N^2=16\)) and the
\(\mathcal N=2\) SYK examples at \(\hat q=3,5,7\) (shaded bars,
\(N_f=16\)).  Every distribution shown concerns a system with 16
complex fermions.
The labels give the standard deviation from half filling,
\(\sigma_R=[\sum_R(R-8)^2h_R/Z_{\mathrm{BPS}}]^{1/2}\) for each matrix
model distribution.
In each computed case, SYK has BPS states in \(\hat q\) sectors and its
BPS count equals the index bound.  The matrix model has nonzero
multiplicities in more sectors and lies strictly above the index bound.
}
\label{fig:charge_support}
\end{figure}

Tensor models without disorder can reproduce the melonic large \(N\)
limit of SYK~\cite{Witten2016SYK,KlebanovTarnopolsky2017}.

Reference~\cite{BiggsMaldacena2026} also studies a model without disorder
in which \(2j+1\) complex fermions transform in the spin \(j\)
representation of \(\SU(2)\).  For observables that are singlets under
\(\SU(2)\), its expansion at large \(j\) is melonic.  At small charge and
low energy, the resulting
equations agree with those of \(\mathcal N=2\) SYK and contain the
corresponding Schwarzian mode.
The present model instead has an exact \(\SU(N)\) symmetry that acts by
conjugation on the matrix fermions.  No melonic limit is constructed
here.  Reference~\cite{ChangChenSiaYang2024} relates the concentration
of BPS states in a narrow range of \(R\) to fortuity in supersymmetric
SYK models.  It also formulates a conjecture concerning quantum chaos
and the supercharge.  The comparison in Figure~\ref{fig:charge_support}
does not by itself determine fortuity under the projections at fixed
\(R\) used here or provide a test of chaos for the present matrix model.

Another comparison is the \(\mathcal N=2\) gauged quantum mechanics
obtained by reducing a Chern--Simons matter theory in three dimensions
on \(S^2\)~\cite{BeniniSoltaniZhang2023}.
It reproduces the entropy of BPS black holes in \(\mathrm{AdS}_4\) and
has statistically distributed couplings, but it arises from string
theory rather than from a supercharge given by a single trace.

\subsection{Bound at large \texorpdfstring{$N$}{N}}
\label{sec:largeN}

Across the three values of \(N\) studied numerically, the proportion of
BPS states decreases while the total BPS count increases, as shown in
Table~\ref{tab:largeN}.

\begin{table}[H]
\centering
\small
\begin{tabular}{crrrc}
\toprule
\(N\) & \(Z_{\mathrm{BPS}}\) & \(2^{N^2}\) & BPS/total & \(\frac{1}{N^2}\log Z_{\mathrm{BPS}}\) \\
\midrule
3 & 440 & 512 & 85.9\% & 0.676 \\
4 & 44{,}000 & 65{,}536 & 67.1\% & 0.668 \\
5 & 15{,}480{,}000 & 33{,}554{,}432 & 46.1\% & 0.662 \\
\bottomrule
\end{tabular}
\caption{Reported numerical quintic BPS data for \(N=3,4,5\).
The BPS fraction decreases across the three displayed values, while the
normalized entropy remains near \(0.66\).  Whether either trend persists
is open.}
\label{tab:largeN}
\end{table}

\noindent The three values do not determine whether
\(N^{-2}\log Z_{\mathrm{BPS}}\) has a limit.  The Witten indices give the
following result for every fixed \(p\).

\begin{theorem}[Bound from the Witten indices]
\label{thm:largeN-bound}
Let \(p\ge3\) be fixed and odd.  The indices in~\eqref{eq:index} obey
\begin{equation}\label{eq:floor-rate}
\lim_{N\to\infty}\frac{1}{N^2}\log\sum_{a=0}^{p-1}|I_a|
=\log\!\bigl(2\cos(\pi/(2p))\bigr).
\end{equation}
Consequently,
\begin{equation}\label{eq:ZBPS-bound}
\liminf_{N\to\infty}\frac{1}{N^2}\log Z^{(p,N)}_{\rm BPS}
\ge\log\!\bigl(2\cos(\pi/(2p))\bigr).
\end{equation}
Moreover, for any sequence of matrix sizes \(N_j\to\infty\) along which
\(N_j^{-2}\log Z^{(p,N_j)}_{\rm BPS}\) converges, its limit lies in
\begin{equation}\label{eq:window}
\Bigl[\log\!\bigl(2\cos(\pi/(2p))\bigr),\log 2\Bigr].
\end{equation}
\end{theorem}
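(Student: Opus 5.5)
The plan is to work directly from the root-of-unity formula \eqref{eq:index}. With \(n=N^2\) and \(\omega=e^{2\pi i/p}\), each index is a finite Fourier sum of the numbers \((1-\omega^\ell)^n\). For \(\ell\ne0\) we have
\[
|1-\omega^\ell|=2\sin(\pi\ell/p),
\]
and the \(\ell=0\) term vanishes. Since \(p\) is odd, the maximum of \(2\sin(\pi\ell/p)\) over \(1\le\ell\le p-1\) is reached exactly at \(\ell=(p\pm1)/2\). Its value there is \(2\sin(\pi(p-1)/(2p))=2\cos(\pi/(2p))=:\rho\). All other \(\ell\) give strictly smaller moduli.

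The upper bound on the rate is immediate from the triangle inequality. Each \(|I_a|\le\frac{p-1}{p}\rho^{\,n}\), so \(\sum_a|I_a|\le(p-1)\rho^{\,n}\).

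For the lower bound I would avoid controlling individual \(I_a\), since cancellation between the two dominant terms could make a single index small. Instead I would use Parseval on \(\Z/p\Z\). The \(I_a\) are the inverse discrete Fourier coefficients of the sequence \(c_\ell=(1-\omega^\ell)^n\), so
\[
\sum_a|I_a|^2=\frac1p\sum_{\ell}|c_\ell|^2\ge\frac{2}{p}\rho^{2n}.
\]
Since \(\sum_a|I_a|\ge(\sum_a|I_a|^2)^{1/2}\), this gives \(\sum_a|I_a|\ge\sqrt{2/p}\,\rho^{\,n}\). Combining the two bounds and taking \(n^{-1}\log\) gives \eqref{eq:floor-rate}, because the prefactors \(p-1\) and \(\sqrt{2/p}\) are fixed while \(n=N^2\to\infty\). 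The only care needed is the Fourier normalization in \eqref{eq:index}: \(I_a=\frac1p\sum_\ell\omega^{-a\ell}c_\ell\) gives \(\sum_a|I_a|^2=\frac1p\sum_\ell|c_\ell|^2\).

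Then \eqref{eq:ZBPS-bound} follows from the index floor \eqref{eq:indexfloor}, \(Z_{\rm BPS}\ge\sum_a|I_a|\), by taking logarithms, dividing by \(N^2\) and passing to the liminf. For the window \eqref{eq:window}, observe that \(Z_{\rm BPS}^{(p,N)}\le\dim\mathcal H=2^{N^2}\), so \(N^{-2}\log Z_{\rm BPS}\le\log2\) for every \(N\). Any convergent subsequence then has its limit between the liminf bound and \(\log 2\).

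One subtlety concerns the range \(p>2N-1\), where \(Q_p=0\). There \(Z_{\rm BPS}=2^{N^2}\), but the index floor still holds, so nothing breaks. In any case, for fixed \(p\) only finitely many \(N\) lie in that range.

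The main obstacle is only the apparent risk of cancellation between the two conjugate dominant terms. The \(\ell^2\)/Parseval step handles this uniformly, so no parity analysis in \(n\) is needed. If one prefers an elementary route, one can instead show that for each \(n\) some residue \(a\) has
\[
\bigl|\omega^{-a(p-1)/2}c_{(p-1)/2}+\omega^{-a(p+1)/2}c_{(p+1)/2}\bigr|\ge\rho^{\,n}\cdot\mathrm{const}.
\]
This holds because the relative phase \(\omega^{-a}\) ranges over all \(p\)-th roots of unity, and the subdominant terms are exponentially smaller.
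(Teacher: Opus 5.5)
Your proposal is correct and follows essentially the same route as the paper. You use Parseval on \(\Z/p\Z\) to get the lower bound \(\sqrt{2/p}\,\bigl(2\cos(\pi/(2p))\bigr)^{N^2}\) from the two dominant modes \(\ell=(p\pm1)/2\), then \(\Zbps\ge\sum_a|I_a|\) and \(\Zbps\le 2^{N^2}\) for the window. The only difference is that you bound \(\sum_a|I_a|\) above by the triangle inequality, obtaining the constant \(p-1\), where the paper uses Cauchy--Schwarz with Parseval to get \(\sqrt{p-1}\); this does not affect the rate.
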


\begin{proof}
Put \(n=N^2\) and \(b_\ell=(1-\omega^\ell)^n\).  Equation~\eqref{eq:index}
is the inverse discrete Fourier transform of \((b_\ell)\).  Expanding its
squared \(\ell^2\)-norm and using the root-of-unity orthogonality relation
\(\sum_{a=0}^{p-1}\omega^{a(\ell-\ell')}=p\,\delta_{\ell,\ell'}\) gives
\begin{equation}\label{eq:index-orthogonality}
\sum_{a=0}^{p-1}|I_a(n)|^2
=\frac{1}{p}\sum_{\ell=0}^{p-1}|1-\omega^\ell|^{2n}.
\end{equation}
For \(\ell=1,\ldots,p-1\), \(|1-\omega^\ell|=2\sin(\pi\ell/p)\).
Since \(p\) is odd, the maximum \(\lambda=2\cos(\pi/(2p))\) is attained
exactly at \(\ell=(p\pm1)/2\).  Therefore, for
\(I(n)=(I_0(n),\ldots,I_{p-1}(n))\),
\[
\sqrt{\frac{2}{p}}\,\lambda^n
\le \lVert I(n)\rVert_2
\le \lVert I(n)\rVert_1
\le \sqrt{p}\,\lVert I(n)\rVert_2
\le \sqrt{p-1}\,\lambda^n.
\]
Taking logarithms, dividing by \(n=N^2\), and letting \(N\to\infty\)
proves~\eqref{eq:floor-rate}.  The signed cohomology identity above gives
\(Z_{\rm BPS}\ge\sum_a|I_a|\), and hence~\eqref{eq:ZBPS-bound}.
Finally, \(Z_{\rm BPS}\le2^{N^2}\), which gives the upper bound
in~\eqref{eq:window}.
\end{proof}

Numerically, the bound is \(0.549\) at \(p=3\), \(0.643\) at \(p=5\), and \(0.668\) at \(p=7\), approaching \(\log 2\approx 0.693\) as \(p\to\infty\).

The observed quintic sequence \(0.676,0.668,0.662\) for \(N=3,4,5\) (Table~\ref{tab:largeN}) lies above~\(0.643\), with a decreasing gap.
This is consistent with \(\frac{1}{N^2}\log Z^{(5,N)}_{\rm BPS}\)
approaching the rate of the index bound from above.
The three available values do not distinguish whether the sequence has
the lower rate in~\eqref{eq:ZBPS-bound} or a larger one.

\subsection{Open questions}

The first open question is whether \((1+x)^N\) divides the BPS
polynomial for every odd \(p\ge5\) with \(p\le 2N-1\).  The cubic case
is known exactly.  Corollary~\ref{cor:structural-factors} proves two
factors for even \(N\) and three factors for odd \(N\).  The additional
factors seen in the data remain unexplained.

Let
\[
\mathcal B^\bullet:=\bigoplus_R\mathcal B_R^{(p,N)},
\]
and let \(\mathcal B_0^\bullet\) denote the cohomology of the traceless
complex.  Theorem~\ref{thm:U1} and the K\"{u}nneth formula give an
isomorphism of graded vector spaces
\[
\mathcal B^\bullet\cong
\Lambda^\bullet(\C\chi)\otimes\mathcal B_0^\bullet.
\]
Each traceless linear combination of the diagonal creation operators
anticommutes with \(Q_p\) and therefore induces a map of degree one on
\(\mathcal B_0^\bullet\).  The traceless diagonal combinations form a
vector space of dimension \(N-1\).  Their induced maps anticommute and
square to zero, so they make \(\mathcal B_0^\bullet\) a graded module over
\(\Lambda^\bullet(\C^{N-1})\).

A sufficient explanation for the remaining factors of \(1+x\) would be
the existence of a graded vector space \(\mathcal M^{(p,N)}\) and an
isomorphism of graded modules
\[
\mathcal B_0^\bullet
\cong\Lambda^\bullet(\C^{N-1})\otimes \mathcal M^{(p,N)}.
\]
Write \(\operatorname{Hilb}\mathcal M^{(p,N)}(x)\) for the Hilbert series
of \(\mathcal M^{(p,N)}\).  Taking Hilbert series would then give
\[
\Zbps^{(p,N)}(x)
=(1+x)^N\operatorname{Hilb}\mathcal M^{(p,N)}(x).
\]
The quotient by \((1+x)^N\) would therefore have nonnegative integer
coefficients.
If the full factorization in~\eqref{eq:factorlaw} holds, then
\[
\operatorname{Hilb}\mathcal M^{(p,N)}(x)
=p^{b_{p,N}}x^{q_{\min}}T_N^{(p)}(x).
\]
The freeness statement would not by itself explain the common power of
\(p\).  Such freeness is not implied by divisibility of the Hilbert
series and is not proved here.
For broader context, Refs.~\cite{CremonesiHananyZaffaroni2014,
CremonesiHananyMekareeya2014} develop Hilbert series for Coulomb branches
of three-dimensional \(\mathcal N=4\) theories.

The second question concerns the exponent \(b_{p,N}\).  The sequence
\(b_{5,N}=1,3,4\) for \(N=3,4,5\) has no known explicit formula.
At \(N=5\), the reported \(5^4\) divisibility of \(h_R\) does not arise
from \(5^4\) divisibility of \(\dim\ker Q_R\) and
\(\operatorname{rank}Q_{R-5}\) separately.  In the first two surviving
sectors \(R=8,9\), the nullities and incoming ranks are divisible by
\(5^2\), while at \(R=10\) they are not.  Nevertheless, the differences
\(h_R\) are all divisible by \(5^4\).  For the reported numerical ranks,
the relation is
\[
\dim\ker Q_R\equiv \mathrm{rank}\,Q_{R-5}\pmod{5^4},
\]
although neither term on the two sides is always divisible by \(5^4\).
Explaining this congruence may help determine \(b_{p,N}\).

The third question is the behaviour as \(N\) grows.  At fixed \(p\),
Theorem~\ref{thm:largeN-bound} gives the interval~\eqref{eq:window}, but
the available values of \(N\) do not determine the limiting rate.  A
different problem is the joint limit along odd integers \(p=p(N)\) that
satisfy \(p(N)\le2N-1\) and \(p(N)/N\to\alpha\in(0,2]\).  The present
calculation does not determine the asymptotic behaviour of
\(N^{-2}\log\Zbps^{(p(N),N)}(1)\) in this regime.

A separate question is whether the model admits, for every \(N\ge3\), a
family of pairwise commuting conserved operators that are
invariant under \(\SU(N)\), go beyond functions of \(H\), and resolve
equivalent irreducible summands.  The nonzero commutators among the
displayed \(K_k\) show that these operators cannot form such a family in
the stated \(N=4\) sectors.  They do not prove chaos or a failure of
integrability, and they do not exclude a different family.

Finally, the stable images at fixed fermion number remain undetermined
for \(R=8,\ldots,12\) at \((p,N)=(5,4)\) and for
\(R=6,\ldots,13\) at \((7,4)\).  No stable image of a nonzero BPS sector
at \(N=5\) is determined by the present data, because that would require
cohomology at larger \(N\).  At rank \(N\), particle--hole conjugation
sends sector \(R\) to sector \(N^2-R\).  Across different values of \(N\),
the latter sectors have a fixed number of holes rather than a fixed
fermion number, so conjugation does not determine the stable images at
fixed \(R\).

\section*{Acknowledgments}

The author thanks Hank Chen, Yiming Chen, Leonardo Santilli, Jan Troost
and Alberto Zaffaroni for helpful comments and correspondence.

\appendix
\renewcommand{\thesection}{Appendix~\Alph{section}}
\renewcommand{\thesubsection}{\Alph{section}.\arabic{subsection}}

\section{Deferred proofs and algebraic background}
\label{app:math}

This appendix supplies the proofs deferred from the main text and
summarizes the facts about cohomology of Lie algebras used in the model.

\subsection{Projections between ranks and fortuity}
\label{app:projections}

\begin{proof}[Proof of Proposition~\ref{thm:rankproj}]
Since $\pi_{M\to N}$ is defined as an algebra homomorphism on $\mathcal H_M$
(sending each $\Psi_{ij}$ to itself if $i,j\le N$ and to zero otherwise,
and extending multiplicatively), write
$q_p^{(N)}:=\sum_{i_1,\ldots,i_p=1}^{N}\Psi_{i_1 i_2}\cdots\Psi_{i_p i_1}$
so that $Q_p^{(N)}(v)=q_p^{(N)}\wedge v$.
Then $\pi_{M\to N}(q_p^{(M)})=q_p^{(N)}$,
since the projection sets to zero every term in $q_p^{(M)}$ involving an index $>N$.
Therefore for all $v\in\mathcal H_M$:
\[
\begin{aligned}
\pi_{M\to N}(Q_p^{(M)}v)
&=\pi_{M\to N}(q_p^{(M)}\wedge v)\\
&=\pi_{M\to N}(q_p^{(M)})\wedge\pi_{M\to N}(v)\\
&=q_p^{(N)}\wedge\pi_{M\to N}(v)
=Q_p^{(N)}\pi_{M\to N}(v),
\end{aligned}
\]
proving the intertwining identity for all $M>N$.
The intertwining identity shows that $\pi_{M\to N}$ sends
$\ker Q_p^{(M)}$ into
$\ker Q_p^{(N)}$ and $\operatorname{im}Q_p^{(M)}$ into
$\operatorname{im}Q_p^{(N)}$, so the induced maps on cohomology are
well defined.  Transitivity follows from
$\pi_{L\to N}=\pi_{M\to N}\circ\pi_{L\to M}$.
The images $\operatorname{Im}\Pi_{M\to N}^{(R)}$ form a descending chain
in the space $\mathcal B_R^{(p,N)}$, which has finite dimension.
For $L>M>N$, transitivity gives
$\operatorname{Im}\Pi_{L\to N}^{(R)}\subseteq\operatorname{Im}\Pi_{M\to N}^{(R)}$.
The finite dimension ensures that the chain stabilizes,
so $\mathcal B_{R,\mathrm{mon}}^{(p,N)}$ equals the eventual stable image.
\end{proof}

\paragraph{Relation to the covering definition.}
Reference~\cite{Chang2024} defines the monotone subspace as the image in
cohomology of a map from one covering complex.  The present paper instead
uses the eventual image of the maps \(\Pi_{M\to N}^{(R)}\) as \(M\)
increases.  Proposition~\ref{thm:rankproj} shows that these images are
well defined in cohomology and stabilize at each fixed fermion number.
This is analogous to the image used in Ref.~\cite{Chang2024}, but no
equivalence of the two constructions is assumed.  In the D1--D5
system, the analogous projection need not commute with the
supercharge~\cite{ChangLinZhang2025}.  Here commutation follows directly
because \(Q_p=\tr(\Psi^p)\) contains only creation operators.

\subsection{Hodge duality at \texorpdfstring{$(p,N)=(5,3)$}{(p,N)=(5,3)}}
\label{app:hodge-duality}

\begin{proof}[Proof of Proposition~\ref{prop:su3-commutativity}]
Throughout this proof a superscript records the model to which an
object of Section~\ref{sec:normalH} belongs.  The map
\(\mathcal A^{(p)}_k\) is defined in \eqref{eq:contraction-map} using
\(\Omega_p\), and \(K^{(p,N)}_k\) is the operator in
\eqref{eq:Kk-definition} for
the model with parameters \((p,N)\).
Here \(N=3\) throughout, and \(K_k\) means \(K^{(5,3)}_k\).

Work first on the traceless Fock space
\(\Lambda^\bullet W_0\), where \(W_0\simeq\mathfrak{sl}_3\).
Fix an orthonormal basis \(\{T_a\}_{a=1}^{8}\) of traceless Hermitian
matrices and let \(W_{0,\mathbb R}\) be its real span.  Thus
\(W_0=W_{0,\mathbb R}\otimes_{\mathbb R}\C\), and the adjoint action of
\(\SU(3)\) on \(W_{0,\mathbb R}\) lies in \(SO(8)\).
We use the complex linear Hodge star obtained by extending the real
Euclidean Hodge star on \(W_{0,\mathbb R}\).  In the basis \(T_a\) it is
a signed permutation of basis wedges, so it is unitary for the induced
Hermitian inner product and commutes with the \(\SU(3)\) action.
The relevant dimension equality is
\[
\dim_\C W_0=N^2-1=8=3+5 ,
\]
special to \(N=3\), which makes \({*}\) carry \(\Lambda^3W_0^*\)
isomorphically onto \(\Lambda^5W_0^*\).
The spaces of invariant forms of degrees three and five on \(W_0\)
both have dimension one~\cite{ChevalleyEilenberg,Koszul1950}.
Therefore, by equivariance,
\begin{equation}\label{eq:su3-hodge-dual}
\Omega_5=c\,{*}\Omega_3
\end{equation}
for some nonzero complex scalar \(c\).
Only the modulus of~\(c\) is needed below.
In the convention fixed by \(\widetilde Q_p=p^{-1}\tr(A^p)\),
\eqref{eq:K0p5} and \eqref{eq:cubic-components} give
\(\|\Omega_5\|^2=24\) and \(\|\Omega_3\|^2=8\).  Since \({*}\) is
unitary, \eqref{eq:su3-hodge-dual} forces
\(|c|^2=3\).
This is the \(\SU(3)\) duality between the primitive
cubic and quintic cocycles
~\cite{deAzcarragaMacfarlane2000,deAzcarragaMacfarlaneCocycles}.

Set \(q=\widetilde Q_3\), so that \(q\) acts on \(\Lambda^\bullet W_0\)
as exterior multiplication by \(\Omega_3\), and put \(B=q^\dagger q\).
We claim that, on \(\Lambda^kW_0\) for \(0\le k\le5\),
\begin{equation}\label{eq:su3-kernel-lift}
\bigl(\mathcal A_k^{(5)}\bigr)^\dagger\mathcal A_k^{(5)}
=3\,B\big|_{\Lambda^kW_0}.
\end{equation}
We use the standard relation between contraction of a Hodge dual and
exterior multiplication.
Write \(\flat\) for the complex linear isomorphism
\(\Lambda^kW_0\to\Lambda^kW_0^*\) obtained from the complex bilinear
extension of the real Euclidean form for which the \(T_a\) are
orthonormal.
Then, for \(\alpha\in\Lambda^kW_0\) and \(\omega\in\Lambda^mW_0^*\),
\begin{equation}\label{eq:hodge-contraction}
\iota_\alpha({*}\omega)=\varepsilon\,{*}(\omega\wedge\alpha^\flat),
\end{equation}
where \(\varepsilon\in\{\pm1\}\) depends only on \(k\), \(m\) and
\(\dim W_0\), not on \(\alpha\) or \(\omega\).
For \(k=1\) this is the usual relation
\(\iota_v({*}\omega)={*}(\omega\wedge v^\flat)\)~\cite{GriffithsHarris1978}.
The general case follows by iterating it, each step contributing a sign
that depends only on the degrees involved.
We suppress \(\flat\) from here on and use this complex linear
identification of \(W_0\) with \(W_0^*\).
Applying \eqref{eq:hodge-contraction} with \(\omega=\Omega_3\) and using
\eqref{eq:su3-hodge-dual}, we get, for all
\(\alpha,\beta\in\Lambda^kW_0\),
\[
\bigl\langle
\mathcal A_k^{(5)}\alpha,\ \mathcal A_k^{(5)}\beta
\bigr\rangle
=|c|^2\bigl\langle
{*}(\Omega_3\wedge\alpha),\ {*}(\Omega_3\wedge\beta)
\bigr\rangle
=3\,\langle \Omega_3\wedge\alpha,\ \Omega_3\wedge\beta\rangle
=3\,\langle q^\dagger q\,\alpha,\ \beta\rangle .
\]
The sign \(\varepsilon\) cancels because it occurs in both arguments of
the sesquilinear form.  The middle equality uses the unitarity of
\({*}\).
This is \eqref{eq:su3-kernel-lift}.
This identity, including the factor~\(3\), is independently verified by
exact arithmetic for every \(k\) in the ancillary program
\texttt{verify\_normal\_ordering.py}.

After normal ordering, let \(B_{[j]}\) denote the part of \(B\) containing
\(j\) creation and \(j\) annihilation operators.  Then
\[
B=\sum_{j=0}^{3}B_{[j]} .
\]
Its components follow from the cubic identities
\eqref{eq:cubic-components} and from the cancellation of the
term with three creation and three annihilation operators in
\(\{q,q^\dagger\}\):
\begin{equation}\label{eq:su3-cubic-bodies}
B_{[0]}=8,\qquad
B_{[1]}=-3\hat n,\qquad
B_{[2]}=3\hat n-\hat C_2,\qquad
B_{[3]}=-q q^\dagger .
\end{equation}

It remains to lift \eqref{eq:su3-kernel-lift} from the sector with
\(k\) particles to the full Fock space.  By
\eqref{eq:Kk-definition}, \(K_k\) preserves fermion number and its
matrix on the sector with \(k\) particles is
\(\bigl(\mathcal A_k^{(5)}\bigr)^\dagger\mathcal A_k^{(5)}\).
For \(|I|=|J|=k\), let \(|I\rangle=a_I^\dagger|0\rangle\) and define
\(B_{I,J}:=\langle I|B|J\rangle\).  By \eqref{eq:su3-kernel-lift}, the
\((I,J)\) entry of this matrix is \(3B_{I,J}\).
More explicitly, consider a term \(a_{I'}^\dagger a_{J'}\) in
\(B_{[j]}\) written in normal order, with
\(|I'|=|J'|=j\).  Its contribution to the \(k\)-particle matrix is
obtained by adjoining the same \((k-j)\)-element set \(T\), disjoint
from \(I'\cup J'\), to the input and output index sets.  The signs
required to order \(I'\cup T\) and \(J'\cup T\) occur once in the
matrix element and once in the corresponding operator in normal order,
so they cancel.  Let \(|O\rangle\) be a basis state with \(r\) occupied
modes.  If \(a_{I'}^\dagger a_{J'}|O\rangle\ne0\), then
\(J'\subset O\) and \(I'\cap(O\setminus J')=\varnothing\).  The allowed
sets \(T\) are therefore exactly the \((k-j)\)-element subsets of
\(O\setminus J'\), of which there are \(\binom{r-j}{k-j}\).  If the
original term annihilates \(|O\rangle\), every term obtained by
adjoining \(T\) also vanishes.  The lift of \(B_{[j]}\) is therefore
\(\binom{\hat n-j}{k-j}B_{[j]}\).
Consequently
\begin{equation}\label{eq:su3-binomial-lift}
K_k^{(5,3)}
=3\sum_{j=0}^{\min(3,k)}
\binom{\hat n-j}{k-j}B_{[j]},
\qquad 0\le k\le4 ,
\end{equation}
as an identity of operators on each sector with fixed \(\hat n\).
This identity is also checked exactly in the ancillary program.
The four operators in \eqref{eq:su3-cubic-bodies} commute:
the first three are functions of \(\hat n\) and \(\hat C_2\), while
\(q q^\dagger\) preserves \(\hat n\) and is \(\SU(3)\)-invariant, hence
commutes with both.
Since \(\hat n\) commutes with all of them, every \(K_k^{(5,3)}\) in
\eqref{eq:su3-binomial-lift} is a polynomial in the four commuting
operators \(\hat n\), \(\hat C_2\), \(qq^\dagger\) and the identity.
Equation~\eqref{eq:su3-binomial-lift} therefore proves that the five
operators \(K_0,\ldots,K_4\) commute pairwise.
Finally, on the full Fock space they act as
\(K_k|_{\Lambda^\bullet W_0}\otimes\mathbf 1_{\Lambda^\bullet(\chi)}\),
so the result extends to the trace factor.
\end{proof}

\subsection{Cohomological background}
\label{app:cohom-background}

\paragraph{Relation to standard cohomological complexes.}
The BPS space is the cohomology of
\[
\bigl(\Lambda^\bullet(\mathfrak{gl}_N),\,Q_p\wedge\bigr).
\]
The differential has degree~\(p\), so fixing fermion number modulo~\(p\)
gives \(p\) independent complexes.  After the exact \(U(1)\)
decoupling of Section~\ref{sec:U1}, the complex is
\[
\Lambda^\bullet(\chi)\otimes
\bigl(\Lambda^\bullet(\mathfrak{sl}_N),\,Q_p\wedge\bigr).
\]
This is not the Chevalley--Eilenberg complex~\cite{ChevalleyEilenberg}.
That differential includes contraction.  For odd \(p\ge3\), the
differential here is exterior multiplication by a primitive cocycle of
degree \(p\) when \(p\le2N-1\), and it vanishes when \(p>2N-1\).  In the
Aomoto complex, exterior multiplication by a class of degree one acts
on an Orlik--Solomon algebra~\cite{Aomoto1975,OrlikTerao2001}.  Here the
underlying algebra is the full exterior algebra and the multiplying
class has degree~\(p\).

What is classical is the existence of the primitive generators and the structure of the invariant algebra of \(\Lambda^\bullet\mathfrak g\)~\cite{ChevalleyEilenberg,Koszul1950,Itoh2015}.
Here we study this cohomology at each fermion number beyond the cubic
case.  For \(p=3\), Chen's Hamiltonian is a constant minus the
quadratic Casimir~\cite{Chen2025}.  Kostant's results describe the
relevant Casimir spectrum on the exterior algebra~\cite{Kostant1965}.
For odd \(p\ge5\) with \(p\le2N-1\), the Hamiltonian need
not be determined by the quadratic Casimir, and the cohomology can be
computed by linear algebra.  The
highest nonzero generator of the form \(\tr(A^p)\) occurs at
\(p=2N-1\).

\paragraph{Generators and the condition \(p\le2N-1\).}
Let \(V=\C^N\), and let \(X=(X_{ij})\) be the generic anticommuting
\(N\times N\) matrix whose entries generate
\(\Lambda(V\otimes V^*)\).  With \(GL(V)\) acting by conjugation, the
invariant subalgebra \(\Lambda(V\otimes V^*)^{GL(V)}\) is generated by
\[
\tr X,\quad \tr X^3,\quad\ldots,\quad \tr X^{2N-1},
\]
with no relations beyond anticommutativity~\cite{Itoh2015}.
For the traceless matrix \(A\) defined in
\eqref{eq:trace-decomposition}, the generator \(\tr X\) of degree one is
absent.
The anticommuting Cayley--Hamilton identity of
Ref.~\cite{Itoh2015} then implies \(\tr(A^p)=0\) for odd
\(p>2N-1\).  Indeed, multiplying that identity by \(A^{2r}\), taking
the trace, and using the vanishing of even traces gives
\[
N\tr(A^{2N-1+2r})=0,\qquad r\ge1.
\]
This algebraic result has a cohomological counterpart.
The Chevalley--Eilenberg theorem~\cite{ChevalleyEilenberg} gives
\[
H^*(\mathfrak{su}(N);\C)\cong \Lambda(c_3,c_5,\ldots,c_{2N-1}),
\]
where \(c_{2m+1}\), for \(m=1,\ldots,N-1\), is the primitive
cohomology class associated with the exponent \(m\) of
\(\mathfrak{su}(N)\)~\cite{ChevalleyEilenberg,Koszul1950}.
Using the trace form to identify \(\mathfrak{sl}_N\) with its dual,
the class \(c_{2m+1}\) is represented, up to normalization, by
\(\tr(A^{2m+1})\) in the invariant exterior algebra
~\cite{Itoh2015,Troost2020}.  Under the standard transgression, the
same class corresponds to the invariant polynomial
\(Y\mapsto\tr(Y^{m+1})\), where \(Y\in\mathfrak{sl}_N\) is an ordinary
commuting matrix variable~\cite{Koszul1950}.  Thus the cohomological
description gives the nonzero traces in degrees
\(3,5,\ldots,2N-1\), while the anticommuting Cayley--Hamilton identity
gives their vanishing above this range.  Therefore, for odd
\(p\ge3\), \(\tr(A^p)\) is nonzero exactly when
\begin{equation}\label{eq:range}
p\le 2N-1.
\end{equation}

\section{Data for each fermion number}
\label{app:data}

The spectral computations at \(N=3\) and \(N=4\) used floating-point
arithmetic with double precision and a decomposition by singular values
of the stacked matrix \(A_R\) in
\eqref{eq:stackedA}.
Fock states are ordered lexicographically by bit pattern, and the
\(Q_p\) matrices are built by explicit enumeration of all closed index
chains in \(\tr(\Psi^p)\), equivalently \(\tr(A^p)\) by
Theorem~\ref{thm:U1}.
Ranks were determined with a threshold of \(10^{-10}\) relative to the
largest singular value.  At \(N=3,4\), the zero and nonzero clusters are
separated by more than four orders of magnitude.

The \(N=5\) computations used sparse matrix assembly, with inner loops
accelerated by Numba, to build each sector map \(Q_R\) as a compressed
sparse row matrix.
The Gram matrix \(Q_R^\top Q_R\) was then decomposed into connected components via its sparsity graph.
Each block was diagonalized independently using dense \texttt{eigvalsh} or LDL\(^\top\) factorization.
In every block, the zero and nonzero eigenvalue clusters are separated by more than two orders of magnitude.
A second implementation reproduced all reported \(N=5\) ranks.  It
finds the connected components by a disjoint set algorithm and counts
the LDL\(^\top\) pivots in each block.

The complete rank arrays for the four cases computed in every sector are
supplied in the ancillary file \texttt{rank\_data.json}.  For \((5,3)\),
\((5,4)\), and \((7,4)\), \texttt{verify\_ranks.py} reconstructs the
sector maps and performs the checks over finite fields described in
Section~\ref{sec:gram}.  For \((5,5)\), the entries through \(R=10\) record
the numerical block computations described above.  The remaining
nontrivial ranks follow from the symmetry in
Eq.~\eqref{eq:rank-palindrome}.  The script does not
attempt the \(N=5\) case.

\subsection{Energy levels: quintic model, \texorpdfstring{$p=5$}{p=5}, \texorpdfstring{$N=4$}{N=4}}
\label{app:data54}

The table gives the spectrum through half filling, \(0\le R\le8\).
Blank cells denote zero multiplicity.

\begin{center}
\small
\begin{tabular}{c|rrrrrrrrr|r}
\toprule
\(R\) & \(E=0\) & 225 & 600 & 900 & 1100 & 1400 & 2100 & 2400 & 3600 & dim \\
\midrule
0 & 0 &  &  &  &  &  &  &  & 1 & 1 \\
1 & 0 &  &  &  &  &  &  & 15 & 1 & 16 \\
2 & 0 &  &  &  &  & 90 &  & 30 &  & 120 \\
3 & 0 &  & 245 &  & 84 & 195 & 20 & 15 & 1 & 560 \\
4 & 125 & 512 & 574 & 175 & 258 & 120 & 55 &  & 1 & 1820 \\
5 & 1375 & 1536 & 497 & 525 & 354 & 15 & 65 &  & 1 & 4368 \\
6 & 5000 & 1536 & 497 & 525 & 354 & 15 & 65 & 15 & 1 & 8008 \\
7 & 9625 & 512 & 574 & 175 & 258 & 210 & 55 & 30 & 1 & 11440 \\
8 & 11750 &  & 490 &  & 168 & 390 & 40 & 30 & 2 & 12870 \\
\bottomrule
\end{tabular}
\end{center}

Sectors \(R=9,\dots,16\) follow by particle--hole conjugation.
The total BPS count is \(44{,}000\).

\subsection{Other computed energy levels}

\paragraph{Quintic model at \(N=3\).}
The distinct energies are \(\{0,225,600\}\), and the BPS multiplicities are given in \eqref{eq:h53}.

\paragraph{Model of degree seven at \(N=4\).}
The distinct energies are
\[
\{0,1764,4410,7350,7840,14700,18816,35280\}.
\]
The BPS multiplicities are given in \eqref{eq:h74}.
The total BPS count is \(59{,}136\).

\begin{sloppypar}
\hbadness=3000

\end{sloppypar}

\end{document}